\newtheorem{prop}{Proposition}
\newtheorem{defi}{Definition}
\newcommand{\C}{{\mathbb C}}
\newcommand{\N}{{\mathbb N}}
\newcommand{\R}{{\mathbb R}}
\newcommand{\cE}{{\mathcal E}}
\newcommand{\cH}{{\mathcal H}}
\newcommand{\cC}{{\mathcal C}}
\newcommand{\SU}{\mathrm{SU}}
\newcommand{\SL}{\mathrm{SL}}
\newcommand{\GL}{\mathrm{GL}}
\newcommand{\U}{\mathrm{U}}
\newcommand{\id}{\mathbbm{1}}
\newcommand{\vJ}{\vec{J}}
\newcommand{\be}{\begin{equation}}
\newcommand{\ee}{\end{equation}}
\newcommand{\beq}{\begin{eqnarray}}
\newcommand{\eeq}{\end{eqnarray}}
\newcommand{\bea}{\begin{eqnarray}}
\newcommand{\eea}{\end{eqnarray}}
\newcommand{\nn}{\nonumber}
\newcommand{\matri} [2] {\left ( \begin{array}{#1}#2\end{array} \right ) }
\newcommand{\bin} [2] {\left (\begin{array}{c}#2\\#1\end{array} \right ) }
\newcommand{\su}{{\mathfrak su}}
\renewcommand{\sl}{{\mathfrak sl}}
\renewcommand{\u}{{\mathfrak u}}
\newcommand{\bra}{\langle}
\newcommand{\ket}{\rangle}
\newcommand{\la}{\langle}
\newcommand{\ra}{\rangle}
\newcommand{\tr}{{\mathrm Tr}}
\newcommand{\f}{\frac}
\newcommand{\rd}{\mathrm{d}}
\newcommand{\Ref}[1]{(\ref{#1})}
\newcommand{\Bz}{{\bf z}}
\newcommand{\Bw}{{\bf w}}
\newcommand{\zo}{z^{0}}
\newcommand{\zi}{z^{1}}
\newcommand{\wo}{w^{0}}
\newcommand{\wi}{w^{1}}
\newcommand{\bzo}{\bar{z}^{0}}
\newcommand{\bzi}{\bar{z}^{1}}
\def\tz{{\widetilde{z}}}
\def\nn{\nonumber}
\def\arr{\rightarrow}
\begin{document}

\title{$\U(N)$ Coherent States for Loop Quantum Gravity}

\author{{\bf Laurent Freidel}\email{lfreidel@perimeterinsititute.ca}}
\affiliation{ Perimeter Institute for Theoretical Physics,
Waterloo, N2L-2Y5, Ontario, Canada.}
\author{{\bf Etera R. Livine}\email{etera.livine@ens-lyon.fr}}
\affiliation{Laboratoire de Physique, ENS Lyon, CNRS-UMR 5672, 46 All\'ee d'Italie, Lyon 69007, France.}

\date{July 2009}

\begin{abstract}
We investigate the geometry of the space of $N$-valent $\SU(2)$ intertwiners.
We propose a new set of holomorphic operators acting on this space and a new set of coherent states which are covariant under $\U(N)$ transformations.
These states are labeled by elements of the Grassmannian $Gr_{N,2}$, they possess a direct geometrical interpretation in terms of framed polyhedra  and are shown to be related to the well-known coherent intertwiners.
\end{abstract}

\maketitle


\section{Introduction}

Loop quantum gravity is a tentative canonical quantization of general relativity where the quantum states of geometry are the so-called spin network states. A spin network is based on a graph $\Gamma$ dressed up  with half-integer spins $j_e$ on its edges and intertwiners $i_v$ on its vertices. The spins define quanta of area while the intertwiners describe chunks of (space) volume. The dynamics then acts on the spins $j_e$ and intertwiners $i_v$, and can also deform the underlying graph $\Gamma$.

Here, we would like to focus on the structure of the space of intertwiners describing the chunk of space.
We focus our study on a region associated with a single vertex of a graph and arbitrary high valency.
Associated with this setting there is, as discussed in  \cite{UN}, a classical geometrical description.
To each edge going out of this vertex we can associate a dual surface element or face whose area is given by the spin label.
The collection of these faces encloses a 3-dimensional volume whose boundary  forms a 2-dimensional polygon with the topology of a sphere.
This 2d-polygon is such that each vertex is trivalent.
At the quantum level the choice of intertwiner $i_v$ attached to the vertex is supposed to describe the shape of the full dual surface and give the volume contained in that surface.

In previous work \cite{OH,UN}, it was uncovered that the space of ($N$-valent) intertwiners carry an {\it irreducible} representation of the unitary group $\U(N)$. These irreducible representations of $\U(N)$ are labeled by one integer: the total area of the dual surface (defined as the sum of the spins coming through this surface). Then the $\U(N)$ transformations deform of the shape of the intertwiner at fixed area.
Moreover, it was proposed in \cite{UN} to view $N$-valent intertwiners as functions on $\U(N)$.
More precisely, it was hinted 
that the space of labels  associated with $N$-valent intertwiners is given by a Grassmannian space defined as a coset of $\U(N)$.
This $\U(N)$-coset space was then interpreted as the space of polyhedra with $N$-faces (and only trivalent vertices) at fixed total area, times one $\U(1)$ phase per face. These phases can be interpreted as a choice of 2d-frame on each face of the polyhedron This provides a clean geometric interpretation to the space of intertwiners as wave-functions over the space of classical $N$-faced polyhedron. It also leads to a clearer picture of what the discrete surface dual to the intertwiner should look like in the semi-classical regime.

Here, we would like to go further: we construct a coherent state basis for the space of $N$-intertwiners at fixed area and we show that
 the Hilbert space of $N$-valent intertwiners is isomorphic to the space of {\it holomorphic} functions on the Grassmanian of complex two planes in $\C^{N}$.

 In order to do so we introduce creation and annihilation operators, which are still $\SU(2)$-invariant and thus act on the space of intertwiners, but do not commute with the total area. These new operators allow to move from one $\U(N)$ irreducible representation to another, thus endowing the full intertwiner space with a Fock space interpretation where the ``number of particles" is now the total area.
We check that the $\u(N)$ generators $E_{ij}$ and the new annihilation/creation operators $F_{ij},F_{ij}^\dag$ form a closed algebra and we show how to use these new creation operators to build coherent states that transform consistently under the $\U(N)$ action.
This allow us to  show that the space of $N$-valent intertwiners is isomorphic to the space of holomorphic functions on the Grassmanian of complex two planes in $\C^{N}$.

In this work we present the explicit construction for these new coherent states which are covariant under $U(N)$,
we compute their norm, scalar product and show that they provide an overcomplete basis.
We compute their semi-classical expectation values and uncertainties and show that they are simply related to the Livine-Speziale coherent intertwiners \cite{coh1}, currently used in the construction of the Engle-Pereira-Rovelli-Livine (EPRL) and Freidel-Krasnov (FK) spinfoam models and their corresponding semi-classical boundary states \cite{EPRL,FK}.
These new coherent states allow us to confirm the polyhedron interpretation of the intertwiner space and in particular underline the relevance of the $\U(1)$ phase/frame attached to each face, which appears very similar to the extra phase entering the definition of the discrete twisted geometries for loop gravity \cite{twisted}.


We hope that this $\U(N)$ framework will turn out useful when studying the dynamics, symmetries and semi-classical regime of loop quantum gravity and spinfoam models.

\section{The $\U(N)$ Framework for $\SU(2)$ Intertwiner Space}

\subsection{The $\U(N)$ Action on Intertwiners}

We start by reviewing the framework introduced in \cite{OH,UN} for studying the space of $\SU(2)$ intertwiners relevant to the definition of the spin network states for quantum geometry.
In the following, we call $V^j$ the Hilbert space corresponding to the irreducible representation (irreps) of $\SU(2)$ with spin $j\in\N/2$. Its dimension is $d_j=(2j+1)\in\N$.
Considering $N$ irreps of spins $j_1,..,j_N\,\in\N/2$, the corresponding space of intertwiners with $N$ legs consists in the vectors in the tensor product of these representations which are invariant under the global $\SU(2)$ action:
\be
\cH_{j_1,..,j_N}\,\equiv\, \textrm{Inv}[V^{j_1}\otimes..\otimes V^{j_N}].
\ee
We further introduce the space of intertwiners with $N$ legs and fixed total are $J=\sum_i j_i$~:
\be
\cH_N^{(J)}\,\equiv\,\bigoplus_{\sum_i j_i=J}\cH_{j_1,..,j_N},
\ee
and finally the total space of $N$-valent intertwiners:
\be
\cH_N\,\equiv\,\bigoplus_{\{j_i\}}\cH_{j_1,..,j_N}\,=\, \bigoplus_{J\in\N}\cH_N^{(J)}.
\ee

The main result from the previous works \cite{OH,UN} is that the Hilbert spaces $\cH_N^{(J)}$ carry irreducible representations of $\U(N)$. In \cite{UN}, this was used to perform an explicit intertwiner counting related to black hole entropy calculations in loop quantum gravity. Here, we would like to push further the analysis of this $\U(N)$ structure in order to understand better in particular its geometric interpretation.

The original motivation to consider this $\U(N)$ action on the space of intertwiners was to identify a closed algebra of $\SU(2)$-invariant observables. More precisely, let us call
$\vJ_{i}$ for the generators in the $\su(2)$ Lie algebra associated to the $i$-th leg of the intertwiner. The usual invariant operators that are considered and used to probe the intertwiner space are the scalar product operators $\vJ_{i}\cdot \vJ_{j}$ for all couples of indices $(i,j)$.
%
%
The standard issue with these scalar product operators is that they do not form a closed Lie algebra: their commutator with each other will generate cubic operators in the $J$'s -the volume operators- and the next step will produce fourth order operators and so on. This tower of higher order operators leads to a a priori infinite dimensional algebra. This is problematic, as an example, when we try to build coherent semi-classical states, minimizing uncertainty relations, since we have an a priori infinite list of such relations to satisfy.
%
%
The solution proposed in \cite{OH}, and later studied in more details in \cite{UN}, is to construct new invariant operators, which turn out to form a closed $\u(N)$ algebra.

The basic tool entering this construction is the Schwinger representation of the $\su(2)$ algebra in term of a pair of (uncoupled) harmonic oscillators:
$$
[a,a^\dag]=[b,b^\dag]=1,\qquad [a,b]=0,
$$
and we define the $\su(2)$ generators as quadratic operators in the $a,b$'s:
\be
J^z\equiv\f12(a^\dag a-b^\dag b),\quad
J^+\equiv a^\dag b, \quad
J^-\equiv ab^\dag.
\ee
We also define the (half) total energy $E$ (or more precisely the number of quanta):
\be
\cE\equiv \f12(a^\dag a+b^\dag b)
\ee
It is straightforward to check that the commutation relations reproduce the expected $\su(2)$ structure:
$$
[J^z,J^\pm]=\pm J^\pm, \quad [J^+,J^-]=2J^z, \quad
[\cE,\vJ]=0.
$$
From these definitions, we identify the correspondence between the usual basis of the Hilbert space for harmonic oscillators and the standard basis of $\SU(2)$ irreps in term of the spin $j$ and magnetic momentum $m$~:
\be
|j,m\ra\,=|n_a,n_b)_{OH}=\frac{(a^{\dagger})^{n_{a}}}{\sqrt{n_{a}!}}\frac{(b^{\dagger})^{n_{b}}}{\sqrt{n_{b}!}} |0,0)_{OH},
\qquad\textrm{with}\quad
j=\f12(n_a+n_b),\quad m=\f12(n_a-n_b).
\ee
The (half) total energy $\cE$ gives the spin $j$ and we check that the $\su(2)$ Casimir operator can indeed be simply expressed in term of $\cE$:
$$
\cC=\vec{J}^2=\cE(\cE+1)=\left(\cE+\f12\right)^2- \f14.
$$

Now considering intertwiners with $N$ legs, we take $N$ irreps of $\SU(2)$, so we use $2N$ oscillators $a_i,b_i$. Following  \cite{OH}, we define the quadratic operators~\footnotemark acting on couples of punctures $(i,j)$:
\be
E_{ij}\equiv (a_i^\dag a_j+b_i^\dag b_j), \quad
E_{ij}^\dag=E_{ji}.
\ee
\footnotetext{
An alternative definition  of these $\u(N)$ generators is given by the symmetrized operators:
$$
\tilde{E}_{ij}\equiv \,\f12(\{a_i^\dag,a_j\}+\{b_i^\dag, b_j\}),
$$
which differ from the previous definition by $+1$ when $i=j$. This corresponds to a shift $j_i\arr j_i+\f12$ in the area spectrum. However this means that the $\tilde{E}$ operators do not all vanish on the vacuum state $|0)$.
}
%
These operators commute with the global $\SU(2)$ transformations generated by $\vJ=\sum_k \vJ_{k}$, so they legitimately define operators acting the intertwiner space:
\be
\forall i,j,\quad \left[\sum_k \vJ_{k}\,,\,E_{ij}\right]=0.
\ee
We check that these new operators form a closed $\u(N)$ Lie algebra:
\be
[E_{ij},E_{kl}]\,=\,
\delta_{jk}E_{il}-\delta_{il}E_{kj}.
\ee
The diagonal operators $E_i\equiv E_{ii}$ form the (abelian) Cartan sub-algebra. Their value on a state gives twice the spin on the $i$th leg, $2j_i$. The $\u(1)$ Casimir operator $E\equiv\sum_k E_k$ generates the global $\U(1)$ phase in $\U(N)$ and its value on a state gives twice the total area $2J=\sum_k 2j_k$. The off-diagonal operators $E_{ij}$ define the lowering and raising operators in $\u(N)$. At the level of spins, $E_{ij}$ increases the spin $j_i$ by $+\f12$ while decreasing the spin $j_j$ by a half-step, thus leaving invariant the total area $J=\sum_k j_k$.

We can easily link the $E_{ij}$ operators with the standard scalar product operators $\vJ_{i}\cdot \vJ_{j}$. The $E$'s are quadratic in the oscillators $a,b$'s while the $\vJ\cdot\vJ$ are quartic, and somehow we can interpret the $E$'s as taking the square-root of the scalar product operators. More precisely, as was shown in \cite{OH}, it is possible to express the operator $\vJ^{(i)}\cdot \vJ^{(j)}$ as a quadratic polynomial of $E_{ij}, E_{ji},E_i,E_j$~:
\be
\forall i\ne j,\,\vJ_{i}\cdot \vJ_{j}
\,=\,
\f12E_{ij} E_{ji} -\f14E_iE_j-\f12 E_i
\,=\,
\f12E_{ji} E_{ij} -\f14E_iE_j-\f12 E_j.
\ee

The $\U(N)$ transformations generated by the operators $E_{ij}$ describe  deformations of the boundary surface dual to the intertwiner vertex which preserves the total area. Having in mind previous work on the $N\arr\infty$ limit of the $\U(N)$ symmetry (e.g. \cite{hoppe}), we proposed in \cite{UN} to interpret these $\U(N)$ transformations as the area-preserving diffeomorphisms (which preserve the symplectic structure) on the discrete quantum sphere defined by the space of $N$-valent intertwiners.

Moreover, we identified in \cite{UN} the precise $\U(N)$ action on the intertwiner space. To this purpose, we realized that the particular realization of the $\u(N)$ generators in term of a couple of harmonic oscillators lead to a set of quadratic constraints:
\be
\forall i,\,
\sum_j E_{ij}E_{ji}=E_i \left(\f E2+N-2\right),
\qquad
\sum_{i,j} E_{ij}E_{ji}=E \left(\f E2+N-2\right).
\label{uncasimir}
\ee
Applying this constraint to the highest weight vector of a $\U(N)$-irrep, $v$ such that $E_i.v= l_i\,v$ with $l_1\ge l_2 \ge l_3\ge..\ge 0$ and $E_{ij}.v=0$ for all $i<j$, we solved these equations to get the eigenvalues $[l_1,l_2,l_3..,l_N]=[l,l,0,..,0]$. This corresponds to Young tableaux with two horizontal lines of equal length.

This shows that that the Hilbert space of $N$-valent intertwiners at fixed total area $J$ is actually an irreducible representation of $\U(N)$ with highest weight vector $[l,l,0,..,0]$ with $l=J$. This highest vector corresponds to the bivalent intertwiner with $j_1=j_2=J/2$ and all other spins set to 0. In \cite{UN}, we have checked this by showing that the explicit and direct counting of $\SU(2)$ intertwiners matches the dimensions and characters of these $[J,J,0,..,0]$ irreps of $\U(N)$. To conclude this review, we give the dimension of these $\U(N)$ representations, i.e the dimension of the intertwiner space with $N$ legs and total area $J$, in term of binomial coefficients:
\be
D_{N,J}\,\equiv\,
\dim \cH^{(J)}_N
\,=\,
\f{1}{J+1}\bin{J}{N+J-1}\bin{J}{N+J-2}
\,=\,
\frac{(N+J-1)!(N+J-2)!}{J!(J+1)! (N-1)!(N-2)!}\,.
\ee

\subsection{Creation and Annihilation Operators for Intertwiners}


We now introduce new operators which are quadratic in the annihilation operators $a,b$:
\be
\forall i\ne j, \, F_{ij}\equiv a_{i}b_{j}-a_{j}b_{i}, \quad F_{ij}=-F_{ji}.
\ee
The operator $F_{ij}$ decreases both  spins $j_i$ and $j_j$ by a half-spin.
The key property of these operators is that they are also invariant under global $\SU(2)$ transformations despite the fact that they do not contain creation operators:
\be
\left[\vJ\,,\,F_{ij}\right]=0.
\ee
In contrast with the $E$ operators, these new operators $F$ do not commute with the total area $E$: the $F$ decreases the total area by $-1$ while a $F^\dag$ operator increases it by $+1$.
Together, all the operators $E,F,F^\dag$ satisfy the following closed algebra~\footnotemark:
\bea
{[}E_{ij},E_{kl}]&=&
\delta_{jk}E_{il}-\delta_{il}E_{kj}\nn\\
{[}E_{ij},F_{kl}] &=& \delta_{il}F_{jk}-\delta_{ik}F_{jl},\qquad
{[}E_{ij},F_{kl}^{\dagger}] = \delta_{jk}F_{il}^{\dagger}-\delta_{jl}F_{ik}^{\dagger}, \label{EF}\\
{[} F_{ij},F^{\dagger}_{kl}] &=& \delta_{ik}E_{lj}-\delta_{il}E_{kj} -\delta_{jk}E_{li}+\delta_{jl}E_{ki}
+2(\delta_{ik}\delta_{jl}-\delta_{il}\delta_{jk}), \nn\\
{[} F_{ij},F_{kl}] &=& 0,\qquad {[} F_{ij}^{\dagger},F_{kl}^{\dagger}] =0.\nn
\eea
\footnotetext{
The constant shift  $2(\delta_{ik}\delta_{jl}-\delta_{il}\delta_{jk})$ in the $[F,F^\dag]$ commutator disappears if we use the alternative definition of the $\u(N)$ generators $\tilde{E}$. The price to pay is that the diagonal operators $\tilde{E}_i$ do not vanish on the vacuum state $|0)$.
}
We can interpret this new set of $F,F^\dag$ operators  respectively as annihilation and creation operators providing the space of $N$-valent intertwiners $\cH_N\,=\, \bigoplus_{J\in\N}\cH_N^{(J)}$ with a Fock space structure. The operators $E_{ij}$ allows to go from one state to another within each subspace $\cH_N^{(J)}$ while the operators $F_{ij},F^\dag_{ij}$ allows to move between subspaces with different total area $J$.

Moreover, we can express the scalar product operators once again in term of the $F$'s:
\be
\forall i\ne j,\,\vJ_{i}\cdot \vJ_{j}
\,=\,
-\f12F^\dag_{ij} F_{ij} +\f14E_iE_j.
\ee
One advantage of this formula over the expression of $\vJ_{i}\cdot \vJ_{j}$ in terms of the $E$'s is that this formula is explicitly symmetric in $i\leftrightarrow j$. It also provides an explicit relation between $F^\dag_{ij} F_{ij}$ and $E^\dag_{ij} E_{ij}$, which are equal up to terms in $E_i,E_j$.

In the following, it will be convenient to write the $E,F$ algebra in terms of the operators
\be
E_{\alpha}\equiv \sum_{ij}\alpha^{ij}E_{ij},\qquad F_{\Bz}\equiv \frac12 \sum_{ij} \overline{\Bz}^{ji}F_{ij},
\quad\textrm{with}\quad
(E_\alpha)^\dag =E_{\alpha^\dag},\quad F_{\Bz^{t}}=-F_\Bz.
\ee
We assume that $\Bz$ is an antisymmetric matrix and  that $\alpha$ is anti-hermitian.
The algebra then reads:
\bea
[E_{\alpha},E_{\beta}] = E_{[\alpha,\beta]}, &\qquad&{[}F_{\Bz},F_{\Bw}^{\dagger}] = E_{\Bw\Bz^{\dagger} }+\,\left(\tr \Bw\Bz^\dag\right)\,\id,\\
{[}E_{\alpha},F_\Bz] =  F_{(\alpha \Bz + \Bz \alpha^{t})}&\qquad& {[}E_{\alpha},F_{\Bz}^{\dagger}] =   F_{(\alpha \Bz + \Bz \alpha^{t})}^{\dagger}.\nn
\eea
For a hermitian matrix $\rho$, we can also write the commutation relations, which are the same up to a sign in the $[E,F^\dag]$ commutator:
$$
{[}E_{\rho},F_\Bz] = - F_{(\rho \Bz + \Bz \rho^{t})},
\qquad
{[}E_{\rho},F_{\Bz}^{\dagger}] =  + F_{(\rho \Bz + \Bz \rho^{t})}^{\dagger}.
$$

\section{Defining $\U(N)$ Coherent States}

\subsection{Working with Spinors}

We denote  $|z\ra \in \C^{2}$ a un-normalized spinor and by $\la z|$ its conjugate.
Explicitly:
$$
|z\ra = \begin{pmatrix}\zo\\\zi\end{pmatrix}\qquad\textrm{and}\qquad
\la z|=(\,\bzo,\bzi\,).
$$
We also introduce the $\SU(2)$ structure map  $\varsigma$:
\be
\varsigma\begin{pmatrix}\zo\\ \zi\end{pmatrix}
\,=\,
\begin{pmatrix}-\bzi\\ \bzo \end{pmatrix},
\qquad \varsigma^{2}=-1.
\ee
This is an anti-unitary map, $\la \varsigma z| \varsigma w\ra= \la w| z\ra=\overline{\la z| w\ra}$, and we will write the related state as
$$
|z]\equiv \varsigma  | z\ra.
$$
The Hermitian inner product and antisymmetric bilinear form are
$$
\la z|w\ra = \bzo\wo + \bzi\wi,\qquad \quad[z|w\ra  = \zo\wi-\zi\wo.
$$
We have the following relations:
\be
[z|w]= {\la w| z\ra},
\qquad [z|w\ra=- [w|z\ra, \qquad \overline{[z|w\ra}= \la w|z] .
\ee

A spinor defines a three-dimensional vector $\vec{J}(z)$ vector by projecting it over the Pauli matrices as follows:
\be
|z\ra \la z| = \f12 \left( {\la z|z\ra}\id  + \vec{J}(z)\cdot\vec{\sigma}\right),\qquad
|z][  z| = \f12 \left({\la z|z\ra}\id - \vec{J}(z)\cdot\vec{\sigma}\right),
\ee
where the Pauli matrices $\sigma_a$ are taken Hermitian and satisfying the normalization $\sigma_a^2=\id$.
The norm of the vector is $|\vec{J}(z)| = \la z|z\ra= |\zo|^2+|\zi|^2$ and its components are explicitly:
\be
J^x=2\,\Re\,(\zo\bzi),\qquad
J^y=2\,\Im\,(\zo\bzi),\qquad
J^z=|\zo|^2-|\zi|^2.
\ee
Notice that changing $\vec{J}(z)$ into $-\vec{J}(z)$ is achieved by mapping $z$ to $\sigma\,z$, or equivalently $|z\ra$ to $|z]$.
Let us also underline the fact that the spinor $z$ is entirely determined by the corresponding 3-vector $\vec{J}(z)$ plus the relative phase between $\zo$ and $\zi$. When looking at the geometric interpretation of coherent states and the $\U(N)$ action on them, we will see that this phase will play a non-trivial role.

\medskip

To construct coherent states, we associate a spinor $z_i$ to each leg of the intertwiner. Thinking of each leg as dual to an elementary face in the dual boundary surface, the vector $\vec{J}(z_i)$ plays the role of the normal vector to the $i$-th face. The norm $|\vec{J}(z_i)|$ then becomes the area of the corresponding face.
One important remark concerns the  closure constraint
 $\sum_{i} J(z_{i})=0$. Written in terms of spinors, it reads
\be
\label{closure1}
\sum_{i} |z_{i}\ra\la z_{i}|  =  A(z)\id,\qquad A(z) \equiv \frac12 \sum_{i} \la z_{i}|z_{i}\ra,
\ee
or equivalently in term of the spinor components:
\be
\label{closure2}
\sum_i \zo_i\,\bzi_i=0,\quad
\sum_i \left|\zo_i\right|^2=\sum_i \left|\zi_i\right|^2=A(z).
\ee
We interpret $A(z)$  as (half) the total area of the dual boundary surface.
In the following, we will often assumes that the  $N$ spinors $|z_{i}\ra $ satisfy this closure constraint.

Note that  there is a natural $\mathrm{SL}(N,\C)$-action of the set of $N$ spinors:
\be\label{SLaction}
|(u z)_{i}\ra  \equiv \sum_{j} u_{ij} |z_{j}\ra.
\ee
If $u$ is in $\U(N)$ this action preserves the closure relation:
\be
\sum_{i} |(u z)_{i}\ra \la (u z)_{i} | =
\sum_{i,j,k} u_{ij} \bar{u}_{ik}  |z_{j}\ra \la z_{k}| =
\sum_{j,k} (u^{\dagger}u)_{kj}  |z_{j}\ra \la z_{k}|
= \sum_{i} |z_{i}\ra \la z_{i}|.
\ee
There is also a natural diagonal  action of $\mathrm{SL}(2,\C)$ on the space of N spinors given by $|z_{i}\ra \to \lambda |z_{i}\ra$.

\subsection{Coherent States and the Geometric action of $\U(N)$ and $\mathrm{SL}(N,\C)$}


Given the $N$ spinors, we construct the following antisymmetric  matrix which is invariant under the diagonal  $\mathrm{SL}(2,\C)$ action:
\be
\Bz_{ij} \equiv  [ z_{i}|z_{j} \ra
\,=\,
{\zo_i\zi_j-\zo_j\zi_i},
\ee
and consider the corresponding creation operator:
$$
F^{\dag}_\Bz\,=\,
\f12\sum \Bz_{ji} F^{\dag}_{ij}
\,=\,
\f12\sum {\zo_j\zi_i-\zo_i\zi_j} F^\dag_{ij}
$$
Note that the matrix $\Bz_{ij}$ and the operator $F^{\dag}_\Bz$ are both holomorphic in  $z_{i}$ and $z_{j}$.

\begin{defi}
Given an integer $J$, we define the following coherent states
\be
|J,z_{i}) \equiv\frac{1}{ \sqrt{(J+1)}} \, \frac{(F^{\dagger}_{\Bz})^{J}}{J!} |0).
\ee
where $|0)$ is the Fock vaccua.

They are coherent with respect to the $\mathrm{SL}(N,\C)$ action:
\be
\label{unaction}
\forall u\in\mathrm{SL}(N,\C),\quad
\hat{u}\, |J,z_{i}) = | J, (u\,z)_{i}),
\ee
for the  transformation $u = e^{\alpha},\,\hat{u} = \exp(  E_{\alpha})$ parameterized in term of the  matrix $\alpha$.
These states are also covariant   under the diagonal $\mathrm{GL}(2,\C)$ action:
\be\label{cov}
\forall \lambda \in\mathrm{GL}(2,\C),\quad
|J, \lambda z_{i}) =  \left(\mathrm{det}(\lambda) \right)^{J}|J,z_{i}).
\ee
This imply that these states  are intertwiner states with total area $J$.
In particular this means that the states are invariant under the diagonal $\mathrm{SL}(2,\C)$ action.
\end{defi}

\begin{proof}
To start with, the vacuum state $|0)$ is obviously $\mathrm{GL}(2,\C)$-invariant. Moreover under a transformation $|z_{i}\ket \to \lambda |z_{i}\ket$ the antisymmetric bracket transforms as $ \Bz_{ij}\to \mathrm{det}(\lambda) \Bz_{ij}$ and so does
$F^{\dagger}_\Bz \to \mathrm{det}(\lambda) F^{\dagger}_\Bz$. This implies the covariance property (\ref{cov}).
Note that in particular this property implies that  the resulting state $|J,z_{i})$ is  $\SU(2)$-invariant and thus an intertwiner. Using the commutation relation $[E,F^{\dagger}_\Bz]=\,2\,F^{\dagger}_\Bz$ where $E=\sum_i E_i$ is the operator defining twice the total area, we easily check that the area is given by the integer $J$~:
$$
E\,|J,z_{i}) = 2J\,|J,z_{i}) .
$$
Let us now consider a  transformation $u=e^{\alpha}$ parameterized by an arbitrary  matrix $\alpha$. The action $z\arr \,u\,z$ leads to the following transformations of the matrix $\Bz$:
\be
\Bz \arr {u} \, \Bz \, u^t.
\ee
We can now compute the action of the  transformation $e^{E_\alpha}$ on the proposed coherent states:
$$
e^{E_\alpha}\,(F^{\dagger}_{\Bz})^{J} |0)
\,=\,
e^{E_\alpha}\,(F^{\dagger}_{\Bz})^{J}e^{-E_\alpha} |0)
\,=\,
\left(e^{E_\alpha}\,F^{\dagger}_{\Bz}\,e^{-E_\alpha}\right)^{J} |0),
$$
since $E_\alpha\,|0)=0$.
Then considering the commutation relation $[E_\alpha, F^\dag_\Bz] =  F^\dag_{\alpha\Bz+\Bz\alpha^{t}}$
we can easily derive the following identity:
\be
e^{E_\alpha} F^\dag_\Bz e^{-E_\alpha} = F^\dag_{e^{\alpha} \Bz e^{\alpha^{t}} }.
\ee
Note that a similar  relation holds for the $F$-operators:
$
e^{E_\alpha} F_\Bz e^{-E_\alpha} = F_{e^{-\alpha^{\dag}} \Bz e^{\bar{\alpha}} }.
\label{UNF}
$
This transformation is the same as $F^{\dag}$ if $\alpha$ is antiunitary, hence if $u\in U(N)$.

We immediately see that this fits the behavior given above of the matrix $\Bz$ under the action of $u=e^{\alpha}$, thus showing that our coherent states behave consistently under $\mathrm{SL}(N,\C)$ transformations:
$$
\hat{u}\, |J,z_{i}) = | J, (u\,z)_{i}).
$$

\end{proof}

\subsection{Coherent states norm}
In this section we assume that the closure identity $\sum_{i}|z_{i}\ra\la z_{i}| = A(z) \id $ is satisfied unless explicitly stated otherwise.
We introduce the following hermitian matrix
\be
\qquad \rho_{ij} \equiv \la z_{i}|z_{j} \ra.
\ee

Due to the closure constraint this  matrix together with $\Bz$ satisfy the relations
\be
\rho \rho = A\, \rho,\qquad
\rho^{t}\, \Bz =  \Bz\,\rho =  A\, \Bz,\qquad
\Bz^{\dagger}\Bz= A(z) \rho,\qquad
\Bz\Bz^{\dagger}= A(z) \rho^t= A(z) \bar{\rho}\,.
\ee
Therefore the generators $E_{\bar{\rho}},F_{z}, F^{\dagger}_{z}$ satisfy a $\sl_2$ algebra~\footnotemark:
\bea
[F_{\Bz},F_{\Bz}^{\dagger}] &=& A\, (E_{\bar{\rho}}+\,2A),\\
{[}E_{\bar{\rho}}, F_{\Bz}^{\dagger}] &=&  2 A\, F_{\Bz}^{\dagger},
\qquad
[E_{\bar{\rho}}, F_{\Bz}] =  -2 A\, F_{\Bz}.
\eea
\footnotetext{
Actually, the standard $\sl_2$ algebra is obtained by defining the shifted and renormalized generators:
$$
H = \f1{2A}(E_{\bar{\rho}} + 2 A), \quad X^+ = \f1A F^\dagger_z,  \quad X^-= \f1A F_z,
$$
which satisfy the usual commutation relations $[H,X^\pm]=\pm X^\pm$ and $[X_+,X_-]=2H$.
}
Notice the constant shift in $E_{\bar{\rho}}$ which is due to the term $\tr\Bz^\dag\Bz=A\,\tr\bar{\rho}=2A^2$ since $A=\bar{A}$ is real (and positive).
\begin{prop}
The norm of  states satisfying the closure constraint is  given by
\be
(J,z_{i}|J,z_{i}) =  (A(z))^{2J}=\left( \frac12 \sum_{i}\la z_{i}|z_{i}\ra\right)^{2J}.
\ee
\end{prop}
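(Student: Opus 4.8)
The plan is to reduce the whole computation to the $\sl_2$ algebra obeyed by $E_{\bar{\rho}},F_\Bz,F^\dagger_\Bz$ on the closure surface, and then to extract the single number $(0|(F_\Bz)^{J}(F^{\dagger}_\Bz)^{J}|0)$ by a downward recursion in $J$. First I would write the squared norm, using $(F^\dagger_\Bz)^\dagger=F_\Bz$, as
\be
(J,z_{i}|J,z_{i}) \,=\, \frac{1}{(J+1)\,(J!)^2}\,(0|(F_\Bz)^{J}(F^{\dagger}_\Bz)^{J}|0)\,\equiv\,\frac{n_J}{(J+1)\,(J!)^2},
\ee
so that everything comes down to determining $n_J$.

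The key preliminary observation is that $E_{\bar{\rho}}$ acts diagonally on the tower $(F^\dagger_\Bz)^{m}|0)$. Since every $E_{ij}$ contains an annihilation operator we have $E_{\bar{\rho}}|0)=0$, and iterating the commutator $[E_{\bar{\rho}},F^\dagger_\Bz]=2A\,F^\dagger_\Bz$ gives
\be
E_{\bar{\rho}}\,(F^\dagger_\Bz)^{m}|0) \,=\, 2Am\,(F^\dagger_\Bz)^{m}|0).
\ee
This is exactly the simplification that the closure constraint buys us: off the constraint surface the $E$'s would not act as a scalar on each level and the recursion below would not close. This is the step I expect to be the crux, so I would make explicit that the relations $[F_\Bz,F^\dagger_\Bz]=A(E_{\bar{\rho}}+2A)$ and $[E_{\bar{\rho}},F^\dagger_\Bz]=2A\,F^\dagger_\Bz$ used here are precisely those valid when $\sum_i|z_i\ra\la z_i|=A\,\id$ holds.

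Next I would establish the annihilation identity by commuting one $F_\Bz$ through the string of creation operators. Writing $F_\Bz(F^\dagger_\Bz)^{J}=\sum_{k=0}^{J-1}(F^\dagger_\Bz)^{k}[F_\Bz,F^\dagger_\Bz](F^\dagger_\Bz)^{J-1-k}+(F^\dagger_\Bz)^{J}F_\Bz$, dropping the last term on the vacuum, and inserting $[F_\Bz,F^\dagger_\Bz]=A(E_{\bar{\rho}}+2A)$ together with the eigenvalue above, each summand collapses to $2A^{2}(J-k)(F^\dagger_\Bz)^{J-1}|0)$; summing the telescoping arithmetic series $\sum_{k=0}^{J-1}(J-k)=J(J+1)/2$ yields
\be
F_\Bz\,(F^\dagger_\Bz)^{J}|0) \,=\, A^{2}\,J(J+1)\,(F^\dagger_\Bz)^{J-1}|0).
\ee
Sandwiching this between $(0|(F_\Bz)^{J-1}$ gives the recursion $n_J=A^{2}J(J+1)\,n_{J-1}$ with $n_0=1$, whose solution is $n_J=A^{2J}\,J!\,(J+1)!$.

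Finally, substituting into the normalization produces
\be
(J,z_{i}|J,z_{i}) \,=\, \frac{A^{2J}\,J!\,(J+1)!}{(J+1)\,(J!)^{2}} \,=\, A^{2J} \,=\, (A(z))^{2J},
\ee
which is the claim. The combinatorics of the telescoping sum and the final cancellation are routine; the only genuine obstacle is the diagonalization step, where one must use closure to replace $E_{\bar{\rho}}$ by the pure number $2Am$ on each level of the tower.
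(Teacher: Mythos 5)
Your proof is correct and takes essentially the same approach as the paper: both exploit the closure-induced $\sl_2$ relations $[F_{\Bz},F_{\Bz}^{\dagger}]=A\,(E_{\bar{\rho}}+2A)$ and $[E_{\bar{\rho}},F_{\Bz}^{\dagger}]=2A\,F_{\Bz}^{\dagger}$ to obtain $F_{\Bz}(F_{\Bz}^{\dagger})^{J}|0)=A^{2}J(J+1)\,(F_{\Bz}^{\dagger})^{J-1}|0)$, and then iterate to get $(0|(F_{\Bz})^{J}(F_{\Bz}^{\dagger})^{J}|0)=A^{2J}J!(J+1)!$. The only cosmetic difference is that the paper first derives the commutator $[F_{\Bz},(F_{\Bz}^{\dagger})^{n}]$ as an operator identity and then evaluates it on the vacuum, whereas you apply the eigenvalue relation $E_{\bar{\rho}}(F_{\Bz}^{\dagger})^{m}|0)=2Am\,(F_{\Bz}^{\dagger})^{m}|0)$ directly on the vacuum throughout.
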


\begin{proof}

First, we iterate the commutation relation $[E_{\bar{\rho}},F^\dag_\Bz]=\,2A\,F^\dag_\Bz$ to get:
\be
E_{\bar{\rho}}(F_\Bz^\dag)^{n}
\,=\,
(F_\Bz^\dag)^{n}(E_{\bar{\rho}} +2nA).
\label{ErhoF}
\ee
Then using the commutator $[F_\Bz,F^\dag_\Bz]=\,A\,(E_{\bar{\rho}}+2A)$ allows us to compute:
\bea
{[}F_\Bz,(F_\Bz^\dag)^{n}]
&=&
A \sum_{k=0}^{n-1}(F_\Bz^\dag)^{n-k-1}(E_{\bar{\rho}}+2A)(F_\Bz^\dag)^{k} \nn\\
&=&
A(F_\Bz^\dag)^{n-1} \sum_{k=0}^{n-1}(E_{\bar{\rho}}+2(k+1)A)
\,=\,
An \,(F_\Bz^\dag)^{n-1} \left( E_{\bar{\rho}}+(n+1)A\right).
\eea
Since both $E$ and $F$ operators vanish on the vacuum state, $E_{\bar{\rho}}|0)=\,F_\Bz|0)=\,0$, we obtain:
\be
F_\Bz (F_\Bz^\dag)^{n}|0)
\,=\,
A^{2}n(n+1)  (F_\Bz^\dag)^{n-1}|0).
\label{FFd}
\ee
Thus iterating this relation, we finally compute:
\be
F_\Bz^{J} (F_\Bz^\dag)^{J}|0)= A^{2J}J!(J+1)!  |0).
\ee

\end{proof}

A side-product of this proof is that the commutation relation \Ref{ErhoF} implies that the coherent state $|J,z_i\ra$ is an eigenstate of the $\u(N)$ generator $E_{\bar{\rho}}$~:
\be
E_{\bar{\rho}}\,|J,z_i\ra\,=\, 2JA(z)\,|J,z_i\ra.
\ee

Now we would like to relax the  closure condition. We define the following 2 by 2 matrix
$$
X(z)\equiv \sum_{i} |z_{i}\ra \la z_{i}|.
$$
This is a positive hermitian matrix; therefore there exists a matrix $\lambda \in \mathrm{SL}(2,\C)$ such that $ X(z) = \sqrt{\det(X)} \lambda \lambda^{\dagger}$. The transformation $ \lambda$ is uniquely determined by this condition up to $\mathrm{SU}(2)$ transformations
$\lambda \to \lambda g$.
By construction the states $ |z'_{i}\ra = \lambda^{-1}|z_{i}\ra$ do satisfy the closure constraint and $A(z'_{i}) = \sqrt{\det(X)(z_{i})}$. This is an important point: we can always rotate by $\SL(2,\C)$ the $N$ spinors $z_i$ so that they satisfy the closer constraints. The present $\SL(2,\C)$-action is the same than the $\SL(2,\C)$-action used earlier in \cite{coh2,holomorph} to rotate the labels of coherent intertwiners in order that they satisfy the closure constraints.

Here, we use this construction and the $\mathrm{SL}(2,\C)$ invariance of the coherent states to compute the norm of an arbitrary state:
\be\label{normgen}
(J,z_{i}|J,z_{i}) =  (J,z'_{i}|J,z'_{i})
= A(z_{i}')^{2J}
=  \left(\det \left( \sum_{i} |z_{i}\ra \la z_{i}|\right)\right)^{J}.
\ee
This generalizes the computation of the norm for spinors which do not necessarily satisfy the closure constraints.

What is interesting about this formula is the fact that the states satisfying the closure constraints  maximize this norm for fixed area.
Indeed, suppose that  $|z_{i}\ra$ is a set of spinors of fixed area $A$:
$$
\sum_{i} \la z_{i}|z_{i}\ra = 2A(z)=  \tr X(z).
$$
Let us define the dispersion $ \Delta(z)^{2} \equiv \frac12 \tr \left(X(z)- A(z) \id \right)^{2}$. It measures how far we are from the closure constraints. Indeed we can easily express it in term of the 3-vectors:
\be
\Delta(z)^{2}
\,=\,
\f14\,\left|\sum_i \vJ(z_i)\right|^2.
\ee
Then using a standard identity on 2$\times$2 matrices relating their determinant to their trace, $(\tr X)^{2}-\tr(X^{2})= 2\det X$, we evaluate the norm of the coherent states in term of the area $A(z)$ and the dispersion $\Delta(z)$:
\be
\det(X(z))= A^{2}(z) - \Delta(z)^{2}.
\label{dispersion}
\ee
Thus
\be
\frac{(J,z_{i}|J,z_{i})}{A^{2J}(z_{i})} = \left(1-\left(\frac{\Delta(z)}{A(z)} \right)^{2}\right)^{J} < e^{-J\left(\frac{\Delta(z)}{A(z)} \right)^{2} }
\ee
This means that in the semi-classical limit $J\to \infty$ the contribution from states not satisfying the closure constraints are exponentially suppressed compare to the one satisfying the constraints. This is very similar to the earlier result on coherent intertwiners derived in \cite{coh1}: coherent intertwiners that did not satisfied the classical closure constraints were exponentially suppressed in the decomposition of the identity on the intertwiner space.

\subsection{Scalar products between $\U(N)$ Coherent States}
\label{alaperelomov}

\begin{prop}
The  scalar product between the coherent states exponentiate as a simple power of the area $J$~:\be
( J,z_{i}|J,w_{i}) = (z_{i}|w_{i})^{J} \label{expJ}
\ee
where
\be\label{prod}
(z_{i}|w_{i}) \equiv \mathrm{det}\left(\sum_{i}|w_{i}\ra\la z_{i}|\right)
\,=\,
\left(\f12\tr \,\Bz^{\dagger}\Bw\right)
\ee
\end{prop}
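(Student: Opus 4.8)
The plan is to reduce the scalar product to a vacuum expectation value and then exploit the $\U(N)$-covariance established in the Definition. Writing the bra as the adjoint of the ket gives
\be
(J,z_{i}|J,w_{i}) = \f{1}{(J+1)(J!)^2}\,(0|(F_\Bz)^J(F^\dag_\Bw)^J|0).
\ee
Before attacking this, I would first dispose of the purely algebraic second equality in \Ref{prod}, namely $\det(\sum_i|w_i\ra\la z_i|)=\f12\tr\Bz^\dag\Bw$: expanding the $2\times2$ determinant and using the antisymmetry of $\Bz$ and $\Bw$, this is a one-line Cauchy--Binet computation, so I set it aside. The tempting route for the matrix element itself --- iterating the commutator $[F_\Bz,F^\dag_\Bw]=E_{\Bw\Bz^\dag}+2(z_i|w_i)\,\id$ exactly as in the norm Proposition --- does \emph{not} close: one finds $[E_{\Bw\Bz^\dag},F^\dag_\Bw]=2F^\dag_{\Bw\Bz^\dag\Bw}$, and $\Bw\Bz^\dag\Bw$ is not proportional to $\Bw$ once $z\neq w$, so no $\sl_2$ structure emerges. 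This is the main obstacle, and it forces a reduction strategy instead of a direct algebraic iteration.

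The key idea is to use covariance to trivialize one of the two labels. For $u\in\U(N)$ the operator $\hat u=e^{E_\alpha}$ is unitary, so that $(J,z_{i}|J,w_{i})=(J,(uz)_{i}|J,(uw)_{i})$: the product is invariant under simultaneous $\U(N)$ rotation of both spinor families. Now the two columns of the $N\times2$ array of components $(z_i^0,z_i^1)$ span a subspace of dimension at most two in $\C^N$, so I can choose $u\in\U(N)$ rotating that subspace into $\mathrm{span}(e_1,e_2)$; after this rotation only $z_1,z_2$ remain nonzero. In this frame every $\Bz_{ij}$ vanishes except $\Bz_{12}=[z_1|z_2\ra$, so that $F^\dag_\Bz=-\Bz_{12}F^\dag_{12}$ collapses onto a single generator and the bra reduces to $(J,z_{i}|=\f{(-\overline{\Bz_{12}})^J}{\sqrt{J+1}\,J!}\,(0|(F_{12})^J$. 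The $w$-family, now $\Bw'=u\Bw u^t$, stays arbitrary.

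The crux is then to evaluate $(0|(F_{12})^J(F^\dag_{\Bw'})^J|0)$. Since all creation operators $F^\dag_{ij}$ commute, I would split $F^\dag_{\Bw'}$ into its ``$12$-part'' $-\Bw'_{12}F^\dag_{12}$ and a remainder built from generators $F^\dag_{ij}$ with $i$ or $j\geq3$, then expand the $J$-th power by the binomial theorem. Every term containing a remainder factor deposits at least one quantum into an oscillator of index $\geq3$, which no further creation operator can remove, so such terms are annihilated by the bra $(0|(F_{12})^J$; only the pure term $(-\Bw'_{12}F^\dag_{12})^J$ survives. Using the bivalent normalization $(0|(F_{12})^J(F^\dag_{12})^J|0)=J!(J+1)!$ --- which is simply the norm Proposition evaluated on the standard configuration of two orthonormal unit-area spinors --- I obtain $(J,z_{i}|J,w_{i})=(\overline{\Bz_{12}}\,\Bw'_{12})^J$. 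Finally, since $(z_i|w_i)=\det(\sum_i|w_i\ra\la z_i|)$ is itself $\U(N)$-invariant, evaluating it in the reduced frame yields $\overline{\Bz_{12}}\,\Bw'_{12}$ by the same Cauchy--Binet step, so the two factors coincide and $(J,z_{i}|J,w_{i})=(z_i|w_i)^J$, as claimed. The only delicate point is the projection argument of this last paragraph, where one must argue that orthogonality in the oscillators of index $\geq3$ kills all mixed terms.
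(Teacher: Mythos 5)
Your proof is correct, and it takes a genuinely different route from the paper's. The paper proceeds in two structural steps: it first establishes the $J=1$ case from the commutator $[F_\Bz,F^\dag_\Bw]$, and then proves the factorization $|J,z_i)=|1,z_i)^{\otimes J}$ by showing that $|J,\zeta_i)$ is a normalized highest-weight vector of the $[J,J,0,\dots,0]$ representation (tensor products of normalized highest-weight vectors remain such), transporting this to general closed configurations by the $\U(N)$ action, and finally relaxing closure via $\SL(2,\C)$ invariance. You instead rotate only the bra's spinor family into the bivalent frame — using that the $N\times 2$ array of components has column rank at most two, so a unitary sends all $z_i$ with $i\geq 3$ to zero — and then finish by pure Fock-space orthogonality: every term of the binomial expansion of $(F^\dag_{\Bw'})^J$ containing a generator $F^\dag_{ij}$ with $\max(i,j)\geq 3$ produces a state with nonzero occupation in some mode of index $\geq 3$, hence orthogonal to $(F^\dag_{12})^J|0)$. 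That "delicate point" you flag is in fact solid, since states built from products of creation operators are joint eigenstates of the per-mode number operators $N_k=a^\dag_k a_k+b^\dag_k b_k$, and the eigenvalues disagree. Your route buys uniformity and economy: it handles arbitrary spinors in one pass (no closure-then-relax two-step), and your opening observation — that the $\sl_2$-style iteration used for the norm fails off-diagonal because $[E_{\Bw\Bz^\dag},F^\dag_\Bw]$ produces $F^\dag$ of a matrix not proportional to $\Bw$ — correctly motivates why a reduction is forced. What it does not deliver, and what the paper's longer argument buys, is the structural by-product emphasized there: the identification of these states as Perelomov coherent states with the tensor-power property $|J,z_i)=|1,z_i)^{\otimes J}$, which the paper reuses (e.g., in computing the $F$-action and relating to coherent intertwiners). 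Also note a small notational slip: after the rotation the factor you denote $\overline{\Bz_{12}}$ should consistently be the rotated-frame entry $\overline{\Bz'_{12}}$; the logic is unaffected since $\f12\tr\,\Bz^\dag\Bw$ is invariant under the simultaneous $\U(N)$ rotation, which is exactly how you close the argument.
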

Note that the determinant appearing in the  scalar product is the determinant of the 2-dimensional matrix
 $X(z,w)\equiv \sum_{i}|w_{i}\ra\la z_{i}|$.
 This matrix is holomorphic in $w_{i}$ and antiholomorphic in $z_{i}$.
Thus the previous formula  is clearly an analytical continuation of the  formula \Ref{normgen} for the norm outside the diagonal case  $z_{i}=w_{i}$.
In itself this provides the justification for this scalar product since a biholomorphic function $F(\bar{z},w) $ is entirely determined by its value $F(\bar{z},z)$ on the diagonal.
We give nevertheless a  constructive proof now, which sheds new light on the nature of these coherent states:

\begin{proof}Let us start to show the equality between the two expressions in (\ref{prod}).
One can make use of the spinorial identity
\be
|z][w| +|w\ra\la z| =\la z|w \ra \,\id
\ee
in order to rewrite $\tr \,\Bz^{\dagger}\Bw=\sum_{i,j} \bar{\Bz}_{ij}{\Bw}_{ij}$ as
\bea
\tr \,\Bz^{\dagger}\Bw &=&
\sum_{i,j}  \la z_{j}|z_{i} ] [ w_{i}|w_{j}\ra= \sum_{i,j}\la w_{i}|z_{i}\ra  \la w_{j}|z_{j}\ra  -
\sum_{i,j}\la z_{j}|w_{i} \ra \la  z_{i}|w_{j}\ra\\
&=&    (\tr X)^{2} -\tr X^{2} = 2 \mathrm{det} X
\eea
where we have introduced the two-dimensional matrix $X= \sum_{i} |w_{i}\ra \la z_{i}| $.

In order to prove (\ref{expJ}), we proceed in two steps: we start by proving it in the case $J=1$.
In this case
\bea
 (z_{i}|w_{i})\equiv( 1,z_{i}|1,w_{i})
&=&
\frac12 ( 0|F_\Bz F^{\dagger}_\Bw|0)=\frac12 ( 0| [F_\Bz, F^{\dagger}_\Bw]|0) \nn\\
&=&
\frac12 ( 0| \,\left(E_{ \Bw\Bz^{\dagger}}+\tr\,\Bz^\dag\Bw\right)\,|0) = \frac12\tr\,  \Bz^\dag\Bw.
\eea
This  establishes the result for $J=1$.

To generalize this result to an arbitrary integer $J$, we show that the state $|J,z_i\ra$ is actually directly the $J$-th tensor power of the state $|J=1,z_i)$:
We first remark that for the following choice of spinors,
$$
\zeta_1=\bin{0}{1},\qquad
\zeta_2=\bin{1}{0},\qquad
\zeta_k=\bin{0}{0}\quad \textrm{for}\quad k>2,
$$
%
%
the vector $|J,\zeta_{i})$ is the normalized highest weight vector  of the $\U(N)$-representation  of weight $[J,J,0,\cdots,0]$.
Indeed it is straightforward to check that $F^\dag_{{\mathbb \zeta}}=F_{21}^\dag$ and using (\ref{EF}) that therefore
\be
E_{kl} |J, \zeta_{i}\ra =0 \quad \mathrm{for}\,\, k<l \qquad
\mathrm{and}\quad E_{kk} |J, \zeta_{i}\ra = J(\delta_{k1}+ \delta_{k2}) |J, \zeta_{i}).
\ee
Moreover, we know from the proposition in previous section that this state as {\it unit} norm.
Since the tensor product of two normalized highest-weight states is still a normalized highest-weight state we have
\be
{|J, \zeta_{i})}\otimes{|K, \zeta_{i})}= {|J+K, \zeta_{i})}, \quad \mathrm{thus}\quad |J, \zeta_{i})=  {|1, \zeta_{i})^{\otimes J}}
\ee
Now, an arbitrary coherent state $|J, z_{i})$,  where $|z_{i}\ra$ satisfies the closure constraints,
can be obtained by the action of a $\U(N)$ group element $u$ on such highest-weight vector. Indeed, we choose the group element such that:
\be
u_{i1} = \frac{\zo_i}{\sqrt{A(z)}},\quad u_{i2} = \frac{\zi_i}{\sqrt{A(z)}}.
\label{umatrix}
\ee
This is truly a unitary matrix since the two vectors $u_{i1}$ and $u_{i2}$ are orthonormal due to the closure condition $\Ref{closure2}$  imposed on the spinors. Then it is straightforward to check that $|z_i\ra=\sqrt{A}\,\sum_j u_{ij} \zeta_j $. Thus using the $\U(N)$ action \Ref{unaction} on the coherent states, we obtain:
\be
\frac{1}{ (A(z))^{J}}\,|J, z_{i}) =   u\,|J, \zeta_i)
=u\,{|1, \zeta_{i})^{\otimes J}}
=  \, \frac{1}{(A(z))^{J}}\,|1, z_{i})^{\otimes J}.
\ee
This allows us to conclude that:
\be
(J,z_{i}|J, w_{i}) = ( 1,z_{i}|1, w_{i})^{J} = \left(  \det\, X\right)^{J}.
\ee
when both $z_{i},w_{i}$ satisfies the closure condition.
Using the $\mathrm{SL}(2,\C)$ invariance of the coherent state we can relax the restriction that the closure condition is satisfied since
given an arbitrary set of $N$ spinors $|z'_{i}\ra$ we can always find a $\mathrm{SL}(2,\C)$ element $\lambda$ such that
$|z_{i}\ra \equiv \lambda |z'_{i}\ra$ satisfies the closure condition.
\end{proof}

The important side-product of this proof is that our coherent states that we constructed through the action of the $F^\dag$ operators are actually coherent states {\it \`a la} Perelomov, that is obtained trough the action of our group $\U(N)$ on highest weight vectors. What's particularly interesting is that the highest weight vectors correspond to bivalent intertwiners, with all spinors set to 0 but two of them. Then the $\U(N)$ action allows to go from these bivalent intertwiners to any semi-classical $N$-valent configurations.
More explicitly, given $N$ spinors $|z_{i}\ra$ such  that $\Bz_{12}\neq 0$, we define the following new set of spinors,
\be
|Z_{k}\ra\equiv  \frac{\Bz_{2k}}{\Bz_{21}} |0\ra - \frac{\Bz_{1k}}{\Bz_{21}} |0]\,,
\ee
where we have introduced the simple convention for spinors:
$$
|0\ra\,\equiv\,\matri{c}{1\\0},\qquad
|0]\,\equiv\,\matri{c}{0\\1}.
$$
This spinors are such that $|Z_{1}\ra =|0\ra$ and $Z_{2}=|0]$, and they are related to the original set of spinors by an $\mathrm{GL}(2,\C)$ transformation:
\be
|Z_{k}\ra =\lambda |z_{k}\ra,
\qquad
\lambda
\equiv\f{\left( |0\ra [z_{2}| -|0][z_{1}|\right)}{[z_{2}|z_{1}\ra}
\,=\,
\f1{[z_{1}|z_{2}\ra}\matri{cc}{z_2^1 & -z_2^0 \\ -z_1^1 & z_1^0},
\qquad
\det(\lambda)= \Bz_{12}{}^{-1}.
\ee
Therefore the new coherent state is related to the original by a simple factor, $|J,z_{i})= (\Bz_{12})^{J} |J,Z_{i})$.
We can now construct explicitly a $\mathrm{SL}(N,\C)$ group element $u$ mapping $\zeta_{i}$ onto $Z_{i}$ and find an explicit relationship between the coherent state and the highest weight state:
\be
|J,Z_{i}) = \exp\left( \sum_{k=3}^{N} Z_{k}^{0}E_{k1}+ Z_{k}^{1}E_{k2}\right) |J,\zeta_{i}).
\ee
Finally, since we are dealing with coherent states {\it \`a la} Perelomov, we also expect these $\U(N)$ coherent states to minimize the uncertainty relations, as we check below.

\subsection{Expectation Values and Uncertainties}

Using the $\mathrm{SL}(N,\C)$ action on the coherent states (\ref{SLaction}) and the scalar product between these states, we can now easily compute the expectation values of  $E$-operators on such states.  Indeed, considering the $\mathrm{SL}(N,\C)$ transformations $u_{\alpha}=\exp( E_{\alpha})$ where $E_{\alpha}=\sum_{i,j}\alpha^{ij}E_{ij}$ for arbitrary N by N matrix $\alpha$ we can write:
\be\label{Uab}
(J,z_{i}|\hat{u}_{\beta}^{\dag}\hat{u}_{\alpha} |J, z_{i})
\,=\,
(J,(u_{\beta}z)_{i}|J, (u_{\alpha}\,z)_{i}\ra
\,=\,
\left(\f12\tr\, (u_{\beta}\Bz u_{\beta}^{t})^\dag u_{\alpha}\Bz u^{t}_{\alpha} \right)^J.
\ee
We can now expand this expression in power series in power of $\alpha$ and $\beta$.
At leading order, we recover the norm of the coherent state:
$$
(J,z_{i} |J, z_{i})=\left(\f12\tr\, \Bz^\dag\Bz\right)^J=A^{2J}(z).
$$
where the last equality is valid for states solution of the closure constraints.
Then, at first order in $\alpha$, we obtain after a straightforward calculation:
\bea
\la E_{\alpha} \ra \equiv \frac{(J,z_{i}|E_{\alpha} |J, z_{i})}{(J,z_{i}|J, z_{i})}
= J \frac{(z_{i}|\delta_{\alpha}z_{i})}{(z_{i}|z_{i})}=
2J\frac{\tr\,\alpha\Bz\Bz^\dag}{\tr\, \Bz^\dag\Bz}
\eea
with $\delta_{\alpha} \Bz = \alpha \Bz +\Bz \alpha^{t}$.

It is enough to evaluate the expectation value on states  $|\hat{z}_{i}\ra$ that satisfy the closure constraint $\sum_{i}|\hat{z}_{i}\ra\la \hat{z}_{i}|= A(\hat{z})\id$ and
which are normalized $A(\hat{z}_{i})=1$. For such states $\Bz^{\dag}\Bz = \rho$ and
we get the expectation value:
\be
\la E_{ij}\ra = J\rho_{ij} =
J \,\la \hat{z}_i|\hat{z}_j\ra.
\ee
Note that if necessary we can rewrite this expression in terms of the un-normalized spinors using $|\hat{z_{i}}\ra =|z_{i}\ra / \sqrt{A(z)}$.
The natural question is the ``geometrical" meaning of this expectation value.
So considering the Gram matrix $\rho_{kl}=\la \hat{z}_k|\hat{z}_l\ra$, we know that $\rho^2=\rho$ for normalized states solution of the closure constraints, so that it has eigenvalues 0 or $1$. Moreover, we also know that $\tr\,\rho=2A(\hat{z})=2$, so that we know that the eigenvalues of $\rho$ are $[1,1,0,0,..,0]$. This means that the eigenvalues of the expectation value matrix $\la E_{kl}\ra$ are independent of the spinors $\hat{z}_i$ and equal to $[J,J,0,0,..,0]$, which fits perfectly with the highest weight of the $\U(N)$ representation.

Moreover, the two eigenvectors of $\rho$ are easy to identify. Indeed, using the closure condition on the spinors, we have:
\be
\sum_l\rho_{kl} \la \hat{z}_l|
\,=\,
\sum_l\la \hat{z}_k|\hat{z}_l\ra \la \hat{z}_l|
\,=\,
\la \hat{z}_k|.
\ee
Thus the two eigenvectors of the $\rho$-matrix are the two components of the spinors, $\zo_k$ and $\zi_k$.

If we now expand (\ref{Uab}) at second order we get the expression for the  spread of the coherent states
\be
\la E_{\beta^{\dagger}} E_{\alpha} \ra -\la E_{\beta^{\dag}}\ra \la E_{\alpha}\ra = J \left(
\frac{(\delta_{\beta} z_{i}|\delta_{\alpha} z_{i})}{(z_{i}|z_{i})} -\frac{(\delta_{\beta}z_{i}|z_{i})(z_{i}|\delta_{\alpha}z_{i})}{(z_{i}|z_{i})^{2}}\right)
\ee
with $\delta_{\alpha} \Bz = \alpha \Bz +\Bz \alpha^{t}$.

If one evaluate this expression for normalized states $|\hat{z}_{i}\ra$ satisfying the closure constraint we get
\be
\la E_{\beta^{\dagger}} E_{\alpha} \ra -\la E_{\beta^{\dag}}\ra \la E_{\alpha}\ra = J \left(
\,\tr\,\left(\rho^{t} \beta^{\dagger} \alpha+\Bz^\dag\alpha\Bz\bar{\beta}\right) - (\tr \rho^{t} \alpha)(\tr \rho^{t} \beta)
\right).
\ee
Evaluating this expression for $\beta_{ab}=\alpha_{ab} = \delta_{ab}^{ji}$ gives for instance the evaluation
\bea\label{offdiag1}
\Delta^{2}E_{ij}\equiv\la E_{ij} E_{ji} \ra -\la E_{ij}\ra \la E_{ji}\ra &=& J \left(\rho_{ii} - |\Bz_{ij}|^{2} -|\rho_{ij}|^{2}\right) \\
&=& J |J(\hat{z}_{i})|(1-|J(\hat{z}_{j})|)).
\eea
Here we have used the spinorial definitions of the matrices $\rho$ and $\Bz$ to express their matrix elements in term\footnote{
\bea
&&\rho_{kk}=\la z_k|z_k\ra=|J(z_k)|,\nn\\
&&\rho_{kl}\rho_{lk}=\la z_k|z_l\ra\la z_l|z_k\ra=
\f12\left(|J(z_k)||J(z_l)|+J(z_k)\cdot J(z_l)\right),\nn\\
&&|\Bz_{kl}|^2=\la z_k|z_l][ z_l|z_k\ra=
\f12\left(|J(z_k)||J(z_l)|-J(z_k)\cdot J(z_l)\right).\nn
\eea}
of the 3-vectors  $\vec{J}(z_k)$ and $\vec{J}(z_l)$.
In terms of these vectors the closure  condition reads $\sum_{i}\vec{J}(z_{i})=0$ and the normalization condition
$\sum_{i}|J(\hat{z}_{i})|=2A(\hat{z})=2$. We can also easily compute the spread for the diagonal operators $E_k=E_{kk}$. Applying the previous formula to the matrices $\alpha_{ab}=\delta_{a}^{i}\delta_{b}^{i}$ $\beta_{ab}=\delta_{a}^{j}\delta_{b}^{j}$, we derive the following uncertainties:
\bea
\la E_iE_j\ra-\la E_i\ra\la E_j\ra
={J}\left(\delta_{ij} \rho_{ii}+ |z_{ij}|^2-\rho_{ii}\rho_{jj}\right)
=
-\frac{J}{2}\,\left(|J(\hat{z}_i)||J(\hat{z}_j)|+{J(\hat{z}_i)\cdot J(\hat{z}_j)}\right) + J \delta_{ij}\,.\label{diag2}
\eea
A first remark is that this shows that these coherent states are peaked as the total area $J$ grows to infinity. Indeed, the uncertainties $\Delta E_{ij}$ grow as $\sqrt{J}$ while the mean values $\la E_{ij}\ra$ grow as ${J}$. Thus these $\U(N)$ coherent states are considered as semi-classical states in the asymptotic regime $J\arr\infty$ where $\Delta E_{ij}/\la E_{ij}\ra\sim 1/\sqrt{J} \arr 0$.

We can do two easy checks on this formula. Indeed, the coherent state $|J,z_i\ra$ is an eigenvector of both the total area $E=\sum_i E_{ii}$ and the generator $E_\rho$. Therefore, both these operators would have a vanishing spread. These two choices correspond respectively to $\alpha=\beta =\id$ and $\alpha=\beta=\rho$. An easy calculation shows that the expression above vanishes in both cases as expected since $\delta_{\id}\Bz=\delta_{\rho}\Bz=2\Bz$.

Also, using the quadratic constraint \Ref{uncasimir} between the $\U(N)$ Casimir operator and the total area $E$ and taking into account that the coherent states are eigenvectors of the total area with eigenvalue $E=2\,J$, we easily compute:
\be
\la \sum_{i,j} E_{ij}E_{ji}\ra
\,=\,
\la E\left(\f E 2 +N-2\right)\ra
\,=\,
2J\,(J+N-2).
\ee
Combining this with the expectation values of the $E_{ij}$ operators, we obtain the invariant uncertainty~:
\be
\Delta\,\equiv\,\la \sum_{i,j} E_{ij}E_{ji}\ra-\sum_{i,j}\la  E_{ij}\ra\la E_{ji}\ra
\,=\,
2J\,(J+N-2)  \,-\,  {J^2}\tr\,\rho^2
\,=\,
2J\,(N-2).
\ee
One can actually check this formula by directly summing up the uncertainties \Ref{diag2}-\Ref{offdiag1}. The conclusion is that the averaged standard deviation $\sqrt{\Delta}/N$ grows as $\sqrt{J/N}$ while the average mean values $\sqrt{\sum_{ij}\la E_{ij}\ra^{2}}/N$ grow linearly in $J/N$. So in the large N limit we still need the ratio $J/N$ to go to $\infty$ in order to reach the semi-classical limit.

These off-diagonal relations allow us to compute the expectation values of the scalar product operators $\vec{J}_k\cdot\vec{J}_l$. Indeed, we recall that for $k\ne l$, these are easily expressed in term of the $\u(N)$ generators~:
\bea
4\,\vec{J}_k\cdot\vec{J}_l&=& 2\,E_{kl}E_{lk} - E_kE_l -2E_k\nn\\
&=&E_{kl}E_{lk}+E_{lk}E_{kl} -E_kE_l -(E_k+E_l).\nn
\eea
Applying the previous results give us:
\bea
4\la \vec{J}_k\cdot\vec{J}_l\ra &=&
J^2\,{\vec{J}(\hat{z}_k)\cdot\vec{J}(\hat{z}_l)}
+\frac{J}{2}\,\left({\vec{J}(\hat{z}_k)\cdot\vec{J}(\hat{z}_l)} -3
{|\vec{J}(\hat{z}_k)|\,|\vec{J}(\hat{z}_l)|}
 \right),\\
&\underset{J\gg1}{\sim}&
{J^2}\,\vec{J}(\hat{z}_k)\cdot\vec{J}(\hat{z}_l).
\eea
This confirms the precise geometrical interpretation of the 3-vector $\vec{J}(z_i)$ as the normal vector to the face dual to the $i$th leg of the intertwiner in the semi-classical limit $J\arr\infty$.
We conclude this section with the calculation of the $\U(N)$-invariant uncertainty. Thus our $\U(N)$ coherent states are semi-classical intertwiner states, as we expected since we have already shown that they are coherent states {\it \`a la} Perelomov.


\subsection{The $F$-action on Coherent States}

After having studied the action of the $\u(N)$-operators $E_{ij}$ on the coherent states, we can investigate the action of the annihilation operators. Let us have a closer look at $F_\Bw\,|J,z_i)$ for arbitrary spinors $z_i$ and antisymmetric matrix $\Bw$.

\begin{prop}
The action of the $F$ operator on the coherent states is given by
\be
F_{\Bw} |J,z_{i}) = \sqrt{J(J+1)}  (w_{i}|z_{i}) \,\,|J-1,z_{i})
\ee
where
$
(z_{i}|w_{i}) =\left(\f12\tr \,\Bz^{\dagger}\Bw\right)
$
is the elementary scalar product.
\end{prop}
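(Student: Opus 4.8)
The plan is to reduce the action of $F_\Bw$ on the coherent state to a purely algebraic manipulation inside the Fock space, using only the $E,F,F^\dagger$ commutation relations and the defining expression $|J,z_i)=\frac{1}{\sqrt{J+1}}\frac{(F^\dagger_\Bz)^J}{J!}|0)$. First I would write $F_\Bw|J,z_i)$ as $\frac{1}{\sqrt{J+1}\,J!}F_\Bw(F^\dagger_\Bz)^J|0)$ and, since $F_\Bw|0)=0$, replace $F_\Bw(F^\dagger_\Bz)^J|0)$ by the commutator $[F_\Bw,(F^\dagger_\Bz)^J]|0)$. Expanding this commutator by the telescoping (Leibniz) rule gives $\sum_{k=0}^{J-1}(F^\dagger_\Bz)^{J-1-k}[F_\Bw,F^\dagger_\Bz](F^\dagger_\Bz)^k|0)$, so the whole computation is controlled by the single commutator $[F_\Bw,F^\dagger_\Bz]$ and by how it propagates through the remaining powers of $F^\dagger_\Bz$.

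From the $E,F$ algebra, $[F_\Bw,F^\dagger_\Bz]=E_{\Bz\Bw^\dagger}+(\tr\Bz\Bw^\dagger)\id=E_{\Bz\Bw^\dagger}+2(w_i|z_i)\id$, where I use $\frac12\tr\Bz\Bw^\dagger=(w_i|z_i)$. The constant (identity) piece is harmless, but the operator piece $E_{\Bz\Bw^\dagger}$ is the crux of the argument, and I expect it to be the main obstacle: a priori, commuting $E_{\Bz\Bw^\dagger}$ through $F^\dagger_\Bz$ produces $F^\dagger_{\Bz\Bw^\dagger\Bz+\Bz(\Bz\Bw^\dagger)^t}$, and there is no evident reason this should stay proportional to $F^\dagger_\Bz$; if it did not, the result could not collapse onto a single $|J-1,z_i)$. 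The entire proposition therefore hinges on showing that this new antisymmetric matrix is again a multiple of $\Bz$.

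The key lemma I would establish is the identity $\Bz\Bw^\dagger\Bz=(w_i|z_i)\,\Bz$. I would prove it using the rank-two spinor structure of $\Bz$: writing $\Bz_{ij}=[z_i|z_j\ra$ and $(\Bw^\dagger)_{jk}=\la w_j|w_k]$, the matrix element $(\Bz\Bw^\dagger\Bz)_{il}$ becomes $[z_i|PQ|z_l\ra$ with the $2\times2$ matrices $P=\sum_j|z_j\ra\la w_j|$ and $Q=\sum_k|w_k][z_k|$. Using the spinor completeness relation $|w_k][z_k|=\la w_k|z_k\ra\id-|z_k\ra\la w_k|$ gives $Q=(\tr P)\id-P$, and then Cayley--Hamilton in dimension two collapses $PQ=(\tr P)P-P^2=(\det P)\id$. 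Since $\det P=\det(\sum_j|z_j\ra\la w_j|)=(w_i|z_i)$, the lemma follows. Combined with the antisymmetry relations $\Bz^t=-\Bz$ and $\Bw^\dagger=-\overline{\Bw}$, which turn $\Bz\Bw^\dagger\Bz+\Bz(\Bz\Bw^\dagger)^t$ into $2\Bz\Bw^\dagger\Bz$, this yields the clean relation $[E_{\Bz\Bw^\dagger},F^\dagger_\Bz]=2(w_i|z_i)F^\dagger_\Bz$.

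With this relation in hand the rest is bookkeeping. Iterating gives $E_{\Bz\Bw^\dagger}(F^\dagger_\Bz)^k|0)=2k(w_i|z_i)(F^\dagger_\Bz)^k|0)$ (using $E_{\Bz\Bw^\dagger}|0)=0$), hence $[F_\Bw,F^\dagger_\Bz](F^\dagger_\Bz)^k|0)=2(k+1)(w_i|z_i)(F^\dagger_\Bz)^k|0)$. Summing over $k$ with $\sum_{k=0}^{J-1}(k+1)=\frac{J(J+1)}{2}$ produces $F_\Bw(F^\dagger_\Bz)^J|0)=J(J+1)(w_i|z_i)(F^\dagger_\Bz)^{J-1}|0)$. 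Finally I would reinsert the normalization factors of $|J,z_i)$ and $|J-1,z_i)$; the accumulated constant $\frac{J(J+1)\sqrt{J}(J-1)!}{\sqrt{J+1}\,J!}$ simplifies to $\sqrt{J(J+1)}$, giving exactly $F_\Bw|J,z_i)=\sqrt{J(J+1)}\,(w_i|z_i)\,|J-1,z_i)$. Notably this argument never invokes the closure constraint, so the formula holds for arbitrary spinors $z_i$ and arbitrary antisymmetric $\Bw$.
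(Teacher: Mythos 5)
Your proof is correct, and it takes a genuinely different route from the paper's. The paper proves the proposition via the Perelomov structure: it writes $z_i=\sqrt{A}\,(u\zeta)_i$ for a unitary matrix $u$ (which exists only when the spinors satisfy the closure constraint), conjugates the annihilation operator through $u$ via $u^{-1}F_\Bw u=F_{u^\dag\Bw\bar{u}}$, lets the result act on the bivalent highest-weight state where only the $F_{12}$ component survives, and evaluates the prefactor $A\,(u^t\Bw^\dag u)_{21}=\f12\tr\,\Bw^\dag\Bz$ from the explicit matrix elements of $u$. You instead stay entirely inside the Fock space with the $E,F,F^\dag$ algebra; your key lemma
\be
\Bz\Bw^\dag\Bz=(w_i|z_i)\,\Bz,
\ee
proved by the spinor identity $|z][w|+|w\ra\la z|=\la z|w\ra\id$ together with Cayley--Hamilton for $2\times 2$ matrices, is what makes the commutator iteration close on a multiple of $F^\dag_\Bz$, and the rest mirrors the paper's own norm computation --- indeed your argument is precisely its off-diagonal generalization, reducing to it at $\Bw=\Bz$ where closure gives $\Bz\Bz^\dag\Bz=A^2\Bz$. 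Your route buys generality and self-containedness: it never invokes closure or the group action, so it covers arbitrary spinors directly, whereas the paper's proof as written treats the closure case and implicitly relies on the previously established covariance properties to extend beyond it; your lemma also makes structurally transparent why $F_\Bw$ leaves the labels $z_i$ untouched. The paper's route buys economy, recycling the already-derived highest-weight formula and reinforcing the Perelomov picture. Two small points you should make explicit: (i) the paper states $[E_\alpha,F^\dag_\Bz]=F^\dag_{\alpha\Bz+\Bz\alpha^t}$ for anti-hermitian (or hermitian) $\alpha$, while your $\alpha=\Bz\Bw^\dag$ is neither --- this is harmless because $E_\alpha$ and $F^\dag_\Bz$ are complex-linear in their matrix labels and the $F^\dag$ commutator carries the same sign in both cases, but it deserves a sentence; (ii) your Cayley--Hamilton argument uses $(\Bw^\dag)_{jk}=\la w_j|w_k]$, i.e.\ that $\Bw$ is built from spinors, whereas the proposition allows any antisymmetric $\Bw$; the lemma then extends by antilinearity of both sides in $\Bw$, since spinor-built (rank-two) antisymmetric matrices span the space of antisymmetric matrices.
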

\begin{proof}
To show this proposition, we introduce the unitary matrix $u$ which maps the bivalent ansatz $\zeta_{i}$ to $z_i$ as introduced earlier in the section \ref{alaperelomov}. Assuming that $z_i=\sqrt{A}\,(u\zeta)_i$, we have~:
\be
F_\Bw\,|J,z_i)
\,=\,
A(z)^{J}\,F_\Bw \,u\,|J,\zeta_i)
\,=\,
A(z)^{J}u\,(u^{-1}\,F_\Bw \,u)\,|J,\zeta_i)
\,=\,
A(z)^{J}u\,F_{u^{\dag}\Bw \bar{u}} \,|J,\zeta_i)\,,
\ee
if we use the $\U(N)$ action on the $F$-operators derived earlier in eqn.\Ref{UNF}.

The particularity of the highest weight vector $|J,\zeta_i\ra$ is that the only $F$-operator which doesn't vanish on it is $F_{12}$. Indeed, we have by definition:
\be
|J,\zeta_i)\,
\,=\,
\f{1}{J!\,\sqrt{J+1}}
\, (F^\dag_{12})^J \,|0).
\ee
Thus, we can easily compute the action of the operator $F_{u^{t}\Bw u}$ on that state:
\be
F_{u^{\dag}\Bw \bar{u}} \,|J,\zeta_i\ra=
\,\f{(u^{t}\Bw^{\dag} u)_{12}}{J!\,\sqrt{J+1}}\, F_{12}(F^\dag_{21})^J \,|0)
=\f{(u^{t}\Bw^{\dag} u)_{12}}{J!\,\sqrt{J+1}}\, J(J+1)\,(F^\dag_{12})^{J-1} \,|0),
\ee
using the formula \Ref{FFd} giving in the commutation relation between $F$ and $(F^\dag)^J$. Then putting the pieces together, we obtain:
\be
F_\Bw\,|J,z_i\ra
\,=\,
A^J\,(u^{t}\Bw u)_{21}\,\sqrt{J(J+1)}\,u\,|J-1,\zeta_i\ra
\,=\,
A\,(u^{t}\Bw u)_{21}\,\sqrt{J(J+1)}\,|J-1,z_i\ra.
\ee
Finally, we can evaluate the pre-factor explicitly by using the formula for the matrix elements for the unitary matrix \Ref{umatrix}:
\be
A\,(u^{t}\Bw^{\dag} u)_{21}=
\sum_{i,j}\zi_i \Bw^{\dag}_{ij} \zo_j
=\f12\tr\,\Bw^\dag\Bz.
\ee
\end{proof}
To conclude, the action of the operators $F_\Bw$ on the $\U(N)$ coherent states is very simple. They truly act like annihilation operators, without acting on the spinor labels $z_i$ but simply on the total area $J$ by decreasing it by one quantum.
The pre-factor $\sqrt{J(J+1)}$ is nevertheless rather intriguing.

\subsection{On the Grassmanian}

Before studying more properties of  our coherent states we would like to investigate further
the space of labels of these coherent states.
Let us start by recalling that our $\U(N)$ coherent states are defined up to an overall transformation by
$\mathrm{GL}(2,\C)$: $z_i^{A}\arr\lambda^{A}_{B}\,z_i^{B}$ or $ |z_{i}\ket \to \lambda |z_{i}\ra$. As we have seen, this transformation  amounts to a simple rescaling of the corresponding states $|J,z_i)$~:
\be
\Bz\arr \mathrm{det}(\bar{\lambda})\,\Bz,\quad
F_\Bz^\dag\arr \mathrm{det}({\lambda}) F_\Bz^\dag,\quad
|J,\lambda\,z_i\ra\,=\,\mathrm{det}({\lambda})^{J} |J,z_i).
\ee
Thus our coherent states are {\it holomorphic} functions on $\C^{2N}$, moreover they are  homogeneous function of degree $J$ under $\mathrm{GL}(2,\C)$ transformations.

Therefore, the rays of $\U(N)$ coherent states are actually labeled by N two-planes in $\C^{N}$  or equivalently elements
of the Grassmanian space $\mathrm{Gr}_{2,N}$.
\be
\mathrm{Gr}_{2,N} \equiv \C^{2N}/\mathrm{GL}(2,\C) = \C\mathbb{P}^{2N-1}/\mathrm{SL}(2,\C).
\ee
A more elaborate way to describe coherent states is first to consider a line bundle over the Grassmanian space $ L_{2,N} \equiv
\C^{2N}/\mathrm{SL}(2,\C) $, so that $ G_{2,N} =L_{2,N}/\C$. Taking tensor product $L_{2,N}^{\otimes J}$ of this fundamental line bundle we get all possible determinant line bundle over $ G_{2,N}$. The coherent states are then {\it holomorphic sections } of $L_{2,N}^{\otimes J}$ (i-e holomorphic functions on $L_{2,N}$ homogeneous of degree $J$).

In order to understand better the geometry of $\mathrm{Gr}_{2,N}$ let us remark that given N spinors $|z_{i}\ket \in \C^{2}$ we can construct the 2 by 2 matrix $ X= \sum_{i} |z_{i}\ket\bra z_{i}|$ which transform under the
$GL(2,\C)$ action by $ X \to \lambda X \lambda^{\dagger}$.
Using this action we can always impose that $X=\id$, the residual invariance being then just a $\mathrm{U}(2)$ invariance.
Therefore if we define $ H(z_{i}) \equiv  \sum_{i} |z_{i}\ket\bra z_{i}| -\id $, we can equivalently characterize the Grassmanian space
as the quotient space
\be
\mathrm{Gr}_{2,N} = H^{-1}(0)/ \mathrm{U}(2).
\ee
This expression expresses that $\mathrm{Gr}_{2,N}$ is a symplectic quotient, hence a symplectic manifold
since $H$ is an hamiltonian generating the $\U(2)$ action on $\C^{2N}$, once we equipped $\C^{2N}$ with the canonical symplectic structure $\{\bar{z}_{i}^{A},z_{j}^{B}\} = \delta_{ij} \delta^{AB}$.
Imposing $H=0$ amounts to impose the closure condition $\sum_{i}\vec{J}(z_{i})=0$ and the normalization condition $\sum_{i}|\vJ(z_{i})|=2$
when written in terms of the 3-vectors.
In fact since it is a symplectic quotient of a Kahler manifold this is a K\"ahler manifold itself with K\"ahler potential proportional to
$K(z_{i})= \ln\left( z_{i}|z_{i} \right)$.
The metric is therefore given by $g = \partial \bar{\partial} K$ or
\be
\rd s^{2} = \frac{ (\rd z_{i}| \rd z_{i})}{(z_{i}|z_{i})} - \frac{ (\rd z_{i}| z_{i})}{(z_{i}|z_{i})}\frac{ (z_{i}| \rd z_{i})}{(z_{i}|z_{i})}.
\ee
and we recognize the metric arising in the computation of the dispersion relation \Ref{dispersion}.

Finally as seen in the previous sections, there is a natural action of  $\mathrm{U}(N)$ on  $\mathrm{Gr}_{2,N}$ given by
\be
|z_{i}\ket \to \sum_{j}u_{ij} |z_{j}\ket
\ee
which preserves the constraint $ H(u z)=H(z)$.
Using this action we can map any element of $H^{-1}(0)$ onto the canonical element
$$|\zeta_{1}\ket=|0\ket,\quad |\zeta_{2}\ket =|0],\quad |\zeta_{k}\ket=0 \quad \textrm{for}\quad k\ge 3 $$
so that $|z_{i}\ket = \sum_{j}u_{ij} |\zeta_{j}\ket$. This shows that  $\mathrm{Gr}_{2,N}$ is a $\U(N)$-homogeneous manifold
 quotient of $\mathrm{U}(N)$ by the isotropy group of $|\zeta_{i}\ket$ which is given by $ \mathrm{U}(N-2)\times \mathrm{U}(2)$. Therefore
 \be
 \mathrm{Gr}_{2,N} = \frac{\U(N)}{\U(N-2)\times \U(2)}.
 \ee
 The mapping  $|z_{i}\ket \to \Bz_{ij}= \bra z_{i}|z_{j}]$ defines the Pl\"ucker embedding  of $\mathrm{Gr}_{2,N}$ into
 the projective space $\mathbb{P}(\C^{N(N-1)/2})= \mathbb{P}(\C^{N}\wedge \C^{N})$.
 The image of the Pl\"ucker embedding is characterized by the following quadratic relations
 \be
 \Bz_{ij}\Bz_{ab} = \Bz_{ia}\Bz_{jb}-\Bz_{ja}\Bz_{ib}.
 \ee
 We can therefore describe the Grassmanian as the subset of $\mathbb{P}(\C^{N}\wedge \C^{N})$ modulo the Pl\"ucker relations.
Note that the fundamental coherent state product $(w_{i}|z_{i}) =1/2\tr \Bw^{\dagger} \Bz$ is the natural hermitian product
of $\mathbb{P}(\C^{N}\wedge \C^{N})$.

\medskip

From this analysis one clearly sees that the space of  states (and not only rays) of arbitrary spin $J$ is isomorphic (as a vector space) to the space of holomorphic function on  $L_{2,N}$ which is a line bundle over the Grassmanian:
\be
\cH_N=
\bigoplus_J \cH_N^{(J)}
\,=\,
\mathrm{Hol}\left(L_{2,N}\right),
\ee
where $S(G)$ denote the subgroup of $G$ of determinant one matrices.

In the previous work \cite{UN}, we had shown that the space $L_{2,N}/\R$  could be interpreted as the space of polyhedra with $N$ faces and fixed total area:
\be
L_{2,N}/\R\equiv \frac{\U(N)}{S(\U(N-2)\times \U(2))}
\sim P_{N}(A)\times \U(1)^{N},
\ee
where $S(G)$ denote the subgroup of $G$ of determinant one matrices.
Here $P_{N}(A)$ is the space of convex polyhedra with fixed total area $A$, only trivalent vertices and possessing $N$ faces (and thus $3(N-2)$ edges and $2(N-2)$ vertices). Then we have an extra $\U(1)$-phase for each face of the polyhedron.

The present $\U(N)$ coherent state construction provide this polyhedron picture with a concrete geometrical realization. Indeed, it gives coherent semi-classical states labeled by the normals $\vec{J}(z_i)$ to the $N$ faces with an explicit action of $\U(N)$ on them. Then one can reconstruct the classical polyhedron from these normals $\vec{J}(z_i)$. Actually, we can go beyond this since the spinors $z_i$ contain information about the normal $\vec{J}(z_i)$ and an extra phase, which we naturally identify as the $\U(1)$-phase appearing in the previous isomorphism between the Grassmannian space and the space of polyhedra.

We propose to interpret this $\U(1)$-phase as a choice of 2d-frame (i.e orthonormal basis) for each face. Its precise geometrical role can be investigated by analyzing the $\U(N)$ action on the new coherent states that we have defined. Such an interpretation would be consistent with the twisted geometries describing the discrete geometry of spin network states in loop quantum gravity \cite{twisted}.

To be more explicit, let us first give the expression of a spinor $z$ in term of its associated normal $\vec{J}(z)$. Let us recall that this normal is simply the projection of the matrix $|z\ra\la z|$ on the Pauli matrices. Then we can write:
\be
\vec{J}(z)=\tr\,\vec{\sigma}\,|z\ra\la z|
\qquad\Rightarrow\quad
\zo=e^{i\phi}\,\sqrt{\f{|\vec{J}|+J^z}{2}},\quad
\zi=e^{i(\phi-\theta)}\,\sqrt{\f{|\vec{J}|-J^z}{2}},\quad
\tan\theta=\f{J^y}{J^x},
\ee
where $e^{i\phi}$ is an arbitrary phase.

Now looking at the $\U(N)$ action on the $N$ spinors , $z_i\arr\,\sum_j u_{ij}z_j\equiv\tz_i$, it acts non-trivially on the $N$ normals:
\be
\vec{J}(z_i)=\tr\,\vec{\sigma}\,|z_i\ra\la z_i|
\,\arr\,
\vec{J}(\tz_i)=\sum_{j,k} u_{ij}\bar{u}_{ik}\,\tr\,\vec{\sigma}\,|z_j\ra\la z_k|.
\ee
The important point is that the transformed normals $\vec{J}(\tz_i)$ do not only depend on the initial normals $\vec{J}(z_i)$. They will also depend on the phases of the spinors.
Indeed, let us change the phases of the spinors $z_i$, this does not change the normals:
\be
\vec{J}(e^{i\phi_i}z_i)=\vec{J}(z_i).
\ee
Nevertheless, the normals after $\U(N)$ transformation will not match:
\be
\vec{J}((u\,e^{i\phi}z)_i)
=
\sum_{j,k} e^{i(\phi_j-\phi_k)}u_{ij}\bar{u}_{ik}\,\tr\,\vec{\sigma}\,|z_j\ra\la z_k|
\ne
\vec{J}((u\,z)_i).
\ee
This means that the $\U(N)$ transformations do not act on the space of (classical) polyhedra (with $N$ faces) but they truly act on the space of {\it framed  polyhedra} which have an extra $\U(1)$-phase attached to each face.

\subsection{A New Resolution of the Identity}
In this section we want to show that our $\U(N)$ coherent states form an over-complete basis. Since the space of label of our coherent state is the Grassmanian we need to describe first the integration measure $\Omega_{|z_{i}\ra}$ on $\mathrm{Gr}_{2,N}$.
Since $ \mathrm{Gr}_{2,N} = \C^{2N} /\mathrm{GL}(2,\C)$ we can relate this measure to the Lebesgue measure
on $\C^{2N}$ and $\mathrm{GL}(2,\C)$ as follows:
\be\label{Gint}
\int_{\C^{2N}} \left(\prod_i \rd^4|z_i \ra \right) e^{-\sum_{i}\la z_{i}|z_{i}\ra} F(|z_{i}\ra) =
\int_{\mathrm{GL}(2,\C)} \rd \lambda  e^{-\tr(\lambda \lambda^{\dag})}|\det(\lambda)|^{2(N-2)} \int_{\mathrm{Gr}_{2,N}} \Omega_{|z_{i}\ra} F(|z_{i}\ra)
\ee
for  $F$ a function invariant under $\mathrm{GL}(2,\C)$ transformations $|z_{i}\ra \to \lambda |z_{i}\ra$ and where
$\rd^{4}|z_{i}\ra\equiv\rd^{2} z_{i}^{0}\rd^{2}z_{i}^{1}$.
Explicitly this means that we can represent this measure as
\be
\Omega_{|z_{i}\ra} = \prod_{i}\rd^4|z_i \ra \, \delta^{(4)}\left(\sum_{i}|z_{i}\ra\la z_{i}|\right)
\ee
Taking $F=1$ in \Ref{Gint} we get that the volume of $ \mathrm{Gr}_{2,N}$ is given by $ V( \mathrm{Gr}_{2,N}) = \pi^{2(N-2)}/(N-1)!(N-2)!$
and we denote $\hat{\Omega}_{|z_{i}\ra}$ the normalised measure.
We now have the following proposition
\begin{prop}
These $\U(N)$ coherent states provide us with a new decomposition of the identity on the Hilbert space of intertwiners $\cH_N^{(J)}$:
\be\label{decu}
\id_N^{(J)} = D_{N,J} \int_{\mathrm{Gr}_{2,N}}
\frac{|J, z_{i})(J,z_{i}|}{(z_{i}|z_{i})^{J}}\,\,  \hat{\Omega}_{z_{i}}
\ee
where $\hat{\Omega}_{z_{i}}$ is the normalized measure on $\mathrm{Gr}_{2,N}$ and
$$
D_{N,J}=\dim \cH_N^{(J)}= \frac{(N+J-1)!(N+J-2)!}{J!(J+1)! (N-1)!(N-2)!}
$$
is the dimension of the $\U(N)$-representation  of weight $[J,J,0,\cdots,0]$.
\end{prop}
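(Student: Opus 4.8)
The plan is to invoke Schur's lemma. Since $\cH_N^{(J)}$ carries an \emph{irreducible} representation of $\U(N)$ with highest weight $[J,J,0,\dots,0]$, any operator on this space commuting with the full $\U(N)$ action must be a multiple of the identity. Because every coherent state $|J,z_{i})$ has total area $J$, the right-hand side defines an operator
\be
\hat{O}\,\equiv\,\int_{\mathrm{Gr}_{2,N}} \frac{|J,z_{i})(J,z_{i}|}{(z_{i}|z_{i})^{J}}\,\hat{\Omega}_{z_{i}}
\ee
acting within $\cH_N^{(J)}$, and the whole task reduces to checking its $\U(N)$-invariance and then fixing the overall constant by a trace.

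First I would show that $\hat{O}$ commutes with $\hat{u}$ for every $u\in\U(N)$. This rests on three invariance facts, all available from the earlier sections. The coherent states transform covariantly, $\hat{u}\,|J,z_{i})=|J,(uz)_{i})$, by \Ref{unaction} (specialized to the unitary subgroup, for which $\hat{u}=\exp(E_{\alpha})$ with $\alpha$ anti-hermitian is itself unitary). The normalizing factor is $\U(N)$-invariant: from $(z_{i}|w_{i})=\det(\sum_i|w_i\ra\la z_i|)$ together with the identity $\sum_i|(uw)_i\ra\la(uz)_i|=\sum_i|w_i\ra\la z_i|$ valid for unitary $u$ (the same computation as in \Ref{SLaction}), one gets $((uz)_i|(uz)_i)=(z_i|z_i)$. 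Finally the measure $\hat{\Omega}_{z_{i}}$ is $\U(N)$-invariant, since $\mathrm{Gr}_{2,N}=\U(N)/(\U(N-2)\times\U(2))$ is $\U(N)$-homogeneous and the constraint $\sum_i|z_i\ra\la z_i|$ defining $\Omega$ is preserved by the unitary action. Conjugating and then changing variables $z_i\to(u^{-1}z)_i$ in the integral then gives $\hat{u}\,\hat{O}\,\hat{u}^{-1}=\hat{O}$.

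With invariance in hand, Schur's lemma forces $\hat{O}=c\,\id_N^{(J)}$ for some constant $c$. To fix $c$ I would take the trace, so that $\tr\hat{O}=c\,D_{N,J}$. The crucial simplification is that the integrand collapses: by the norm formula \Ref{normgen} one has $\tr\,|J,z_i)(J,z_i|=(J,z_i|J,z_i)=(z_i|z_i)^{J}$, so the ratio in $\hat{O}$ is identically $1$ and
\be
\tr\hat{O}=\int_{\mathrm{Gr}_{2,N}}\hat{\Omega}_{z_{i}}=1,
\ee
the last equality because $\hat{\Omega}$ is the normalized measure. Hence $c=1/D_{N,J}$, and multiplying through by $D_{N,J}$ reproduces \Ref{decu} exactly.

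The main obstacle is not the algebra but the measure-theoretic bookkeeping in the second step: one must make sure that $\hat{\Omega}_{z_{i}}$, obtained by quotienting the Gaussian integral \Ref{Gint} by $\mathrm{GL}(2,\C)$, genuinely descends to a well-defined $\U(N)$-invariant measure on the compact quotient $\mathrm{Gr}_{2,N}$, so that the change of variables and the trace computation are both legitimate. Once the joint invariance of the triple (coherent states, normalization, measure) is secured, Schur's lemma does the remaining work and the constant follows for free from the striking fact that the normalized integrand is identically $1$.
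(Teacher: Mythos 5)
Your proposal is correct and follows essentially the same route as the paper's own proof: establish $\U(N)$-invariance of the operator (using covariance of the states, invariance of $(z_i|z_i)$ and of the measure), invoke Schur's lemma on the irreducible space $\cH_N^{(J)}$, and fix the constant by a trace, which collapses because $(J,z_i|J,z_i)=(z_i|z_i)^J$. Your additional attention to the $\mathrm{GL}(2,\C)$ well-definedness of the integrand matches the paper's opening remark, so there is nothing substantive to add.
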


\begin{proof}
First,  we point out that the quotient $(z_{i}|z_{i})^{-J}\,|J, z_{i})(J,z_{i}|$ in the integral is well-defined on $\mathrm{GL}(2,\C)$ since it is invariant under $\mathrm{GL}(2,\C)$:
$$
|z_i\ra \arr\lambda\,|z_i\ra
\quad\Rightarrow\quad
(\lambda z_{i}|\lambda z_{i})\,=\,|\det(\lambda)|^2 (z_{i}|z_{i}),\quad
|J,\lambda\,z_i)( J,\lambda\,z_{i}|\,=\,|\det(\lambda)|^{2J} |J,z_i)(J,z_{i}|\,.
$$

Then, Let us call $\hat{\cal O}$ the operator on the right hand side of \Ref{decu}.
We know that  the coherent states behave consistently under $\U(N)$ transformations,
$u\,|J,z_i\ra\,=\,|J,(u\,z)_i\ra$,
and that the measure factors $(z_{i}|z_{i})$ and  $\Omega_{|z_{i}\ra}$ are invariant under the action of $\U(N)$.
This imply that $\hat{\cal O}$ commutes with all unitary transformations in $\U(N)$.
Since $\cH_N^{(J)}$ is an irreducible representation of $\U(N)$, we know that $\hat{\cal O}$ is proportional to the identity.

Then we only have to compute the trace of the operator to find out the proportionality coefficient and conclude the proof. This is obvious since  $(J,z_{i}|J, z_{i})=\,(z_{i}|z_{i})^{J}$.

\end{proof}

\subsection{Back to Coherent Intertwiners}

We would like to relate these $\U(N)$ coherent states to the usual  coherent intertwiner states previously introduced in \cite{coh1} and further developed in \cite{coh2,holomorph}. On the one hand, it would help clarify the semi-classical interpretation of these new $\U(N)$ coherent states. On the other hand, it could prove useful for future applications of the $\U(N)$ framework to spinfoam models since the standard coherent intertwiners are a key ingredient of the most recent EPRL-FK spinfoam models for 4d quantum gravity and their semi-classical analysis \cite{coh1, EPRL, FK, conrady,simone}.

Thus we introduce the coherent intertwiner states:
\be
||\vec{\jmath},z_{i}\ra \equiv \int_{\SU(2)} \rd g\,  g\triangleright \otimes_{i=1}^{N} |j_{i},z_{i}\ra.
\ee
The states $|j,z\ra \equiv  |z\ra^{\otimes 2j}$ are the $\SU(2)$ coherent states, which can be obtained by the action of creation operators on the vacuum state:
\be
|j,z\ra =\frac{\left(\zo a^{\dagger} + \zi b^{\dagger}\right)^{2j}}{\sqrt{(2j)!}} |0) .
\ee
Let us nevertheless point out the slight difference with the standard construction of $\SU(2)$ coherent states which are obtained by factorizing out $(\zo)^{2j}$ from the previous expression and writing the state solely in term of the ratio $\zi/\zo$ (see the appendix of \cite{UN} for more details).
The main purpose of this section is to shown that the coherent intertwinner are intimately related to the $U(N)$ invariant coherent states:
\begin{prop}\label{covint}
We have the following relation between $\U(N)$ coherent states and coherent intertwiners:
\be
\frac1{ J!\sqrt{({J+1})} } |J,z_{i})
\,=\,
|J,z_{i})_c \equiv \sum_{\sum{j_{i}}=J}  \frac{(-1)^J}{\sqrt{(2j_{i})!\cdots (2j_{N})!}} \,||\vec{\jmath},z_{i}\ra.
\ee
\end{prop}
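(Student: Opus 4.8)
The plan is to push everything into the Schwinger oscillators, resum the spin sum, and then perform the $\SU(2)$ average explicitly. First I would write each factor $|j_i,z_i\ra=|z_i\ra^{\otimes 2j_i}$ through the stated relation $|j,z\ra=(\zo a^\dagger+\zi b^\dagger)^{2j}/\sqrt{(2j)!}\,|0)$, so that, abbreviating $A_i^\dagger\equiv \zo_i a_i^\dagger+\zi_i b_i^\dagger$,
\be
\frac{1}{\sqrt{\prod_i(2j_i)!}}\,\otimes_i|j_i,z_i\ra=\prod_i\frac{(A_i^\dagger)^{2j_i}}{(2j_i)!}\,|0).
\ee
Summing over $\sum_i j_i=J$ and applying the multinomial theorem collapses the whole spin sum into one power, giving
\be
|J,z_i)_c=\frac{(-1)^J}{(2J)!}\int_{\SU(2)}\rd g\;g\triangleright\left(\sum_i A_i^\dagger\right)^{2J}|0).
\ee
Comparing with $|J,z_i)=(J+1)^{-1/2}(F^\dagger_\Bz)^J|0)/J!$, the proposition reduces to the single operator identity
\be
\int_{\SU(2)}\rd g\;g\triangleright\left(\sum_i A_i^\dagger\right)^{2J}|0)=(-1)^J\,\frac{(2J)!}{J!\,(J+1)!}\,(F^\dagger_\Bz)^J|0),
\ee
whose coefficient is recognizably the Catalan number $C_J$.

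\textbf{The group average.}
The integral $P\equiv\int\rd g\,g\triangleright$ is the projector onto the globally $\SU(2)$-invariant (intertwiner) subspace. Since $g$ acts on each oscillator doublet $(a_i^\dagger,b_i^\dagger)$ in the fundamental, $g\triangleright A_i^\dagger$ is $A_i^\dagger$ with its spinor rotated, and by invariance of the Haar measure $I(z)\equiv P(\sum_i A_i^\dagger)^{2J}|0)$ is unchanged under the diagonal rotation $z_i\to h z_i$. Expanding the $2J$-th power and using the Haar formula $\int\rd g\,D(g)^{\otimes 2J}=$ projector onto the singlet of $(\C^2)^{\otimes 2J}$, which for $\SU(2)$ is a signed sum over $\epsilon$-pairings, each contraction produces a bracket $\Bz_{ij}=[z_i|z_j\ra$ on the spinor slots and a minor $F^\dagger_{ij}=a_i^\dagger b_j^\dagger-a_j^\dagger b_i^\dagger$ on the creation-operator slots. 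Thus every pair of boxes is contracted into a singlet weighted by $\Bz_{ij}F^\dagger_{ij}$, and the free sum over legs rebuilds powers of $F^\dagger_\Bz=\tfrac12\sum_{ij}\Bz_{ji}F^\dagger_{ij}$.

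\textbf{Collapsing the pairing sum.}
The structural point is that both $\Bz_{ij}$ and $F^\dagger_{ij}$ are Pl\"ucker coordinates, being maximal minors of the $2\times N$ matrices built respectively from the spinors and from $(a^\dagger,b^\dagger)$; hence both obey the quadratic relation $\Bz_{ij}\Bz_{kl}=\Bz_{ik}\Bz_{jl}-\Bz_{il}\Bz_{jk}$ (and its verbatim analogue for $F^\dagger$). These relations let me reorganize the sum over all crossing and non-crossing pairings into the single monomial $(F^\dagger_\Bz)^J$, leaving only the combinatorial multiplicity. Equivalently, $I(z)$ is holomorphic and homogeneous of degree $2J$ in the $z_i$ and diagonally $\SU(2)$-invariant, so by the first fundamental theorem for $\SL(2,\C)$ (to which holomorphic $\SU(2)$-invariance extends) it is a degree-$J$ polynomial in the brackets $\Bz_{ij}$; matching the $2J$-creation-operator content then forces $I(z)=c_J\,(F^\dagger_\Bz)^J|0)$ for a single constant $c_J$.

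\textbf{Fixing the constant and the main obstacle.}
To pin down $c_J$ (and its sign) without summing pairings, I would evaluate both sides at the bivalent point $\zeta_1=|0\ra,\ \zeta_2=|0],\ \zeta_{k\ge3}=0$ of Section~\ref{alaperelomov}, where $\sum_i A_i^\dagger=a_1^\dagger+b_2^\dagger$ and $F^\dagger_\Bz=F^\dagger_{21}$. The identity then becomes the elementary two-oscillator statement relating the singlet projection of $(a_1^\dagger+b_2^\dagger)^{2J}|0)$ to $(F^\dagger_{21})^J|0)$, yielding $c_J=(-1)^J(2J)!/[J!(J+1)!]$. The main obstacle is the general-$J$ group average: for $J=1$ a single pairing immediately gives $I=-F^\dagger_\Bz|0)$, but for $J\ge2$ crossing pairings appear, and the genuine work is showing—via the Pl\"ucker relations, or equivalently via the uniqueness of the equivariant map from spinor data to intertwiners—that they reorganize into exactly $(F^\dagger_\Bz)^J$ with the Catalan coefficient.
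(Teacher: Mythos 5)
Your overall architecture is sound and genuinely different from the paper's. You resum the spin sum by the multinomial theorem (the paper does the same, but inside an overlap) and then try to evaluate the group average \emph{as a state}, reducing the proposition to the operator identity $\int\rd g\; g\triangleright (K^\dag_z)^{2J}|0)=(-1)^J\tfrac{(2J)!}{J!(J+1)!}(F^\dag_\Bz)^J|0)$ with $K^\dag_z=\sum_i(\zo_i a_i^\dag+\zi_i b_i^\dag)$, fixing the Catalan constant at the bivalent point $\zeta_i$. The paper never evaluates the projection itself: it pairs $|J,z_i)_c$ against the family $(J,w_i|$, uses $\SU(2)$-invariance of $F_\Bw$ to drop the group average, computes $(0|F_\Bw^J(K^\dag_z)^{2J}|0)=(2J)!\left(-\tfrac12\tr\Bw^\dag\Bz\right)^J$ by pure commutator algebra, and then invokes overcompleteness (the resolution of the identity) to upgrade equality of all overlaps to equality of states. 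Your bivalent normalization $c_J=(-1)^J(2J)!/[J!(J+1)!]$ and your $J=1$ check $I=-F^\dag_\Bz|0)$ are both correct.

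However, there is a genuine gap at exactly the step you flag, and neither of your two suggested arguments closes it as stated. First, the Haar projector is \emph{not} a signed sum over $\epsilon$-pairings: for $\SU(2)$ one has a Weingarten-type double sum $\int\rd g\,\prod_{k=1}^{2J}g_{B_kA_k}=\sum_{\pi,\sigma}C_{\pi\sigma}\prod_{(kl)\in\pi}\epsilon_{B_kB_l}\prod_{(kl)\in\sigma}\epsilon_{A_kA_l}$ over \emph{independent} pairings of the spinor slots and of the oscillator slots. Only the diagonal terms $\pi=\sigma$ factorize through $\sum_{ij}\Bz_{ij}F^\dag_{ij}=-2F^\dag_\Bz$; the cross terms produce mixed invariants such as $\sum_{ijkl}\Bz_{ij}\Bz_{kl}F^\dag_{ik}F^\dag_{jl}|0)$, which is the crossing problem in disguise. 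Second, the inference ``first fundamental theorem $+$ matching the creation-operator content $\Rightarrow$ $I(z)=c_J(F^\dag_\Bz)^J|0)$'' is invalid for vector-valued invariants: the component of $I(z)$ with spins $\vec\jmath$ a priori lives in (bracket polynomials of multidegree $2\vec\jmath$)$\,\otimes\,\cH_{\vec\jmath}$, a space of dimension $(\dim\cH_{\vec\jmath})^2$, and degree counting cannot force the bracket indices to align diagonally with the $F^\dag$ indices. The clean way to finish along your own lines is the equivariance argument you only name: since $\hat u$ commutes with the $\SU(2)$ projector and $\hat u K^\dag_z\hat u^{-1}=K^\dag_{uz}$, one has $I(uz)=\hat u\,I(z)$ for $u\in\U(N)$; combined with diagonal $\SL(2,\C)$-invariance (holomorphic extension of $\SU(2)$-invariance), homogeneity of degree $2J$, and the paper's fact that any generic configuration decomposes as $z_i=\sqrt{A}\,\lambda\,(u\zeta)_i$, your bivalent evaluation propagates to all $z$, because $(F^\dag_\Bz)^J|0)$ transforms in exactly the same way under $\U(N)\times\GL(2,\C)$. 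With that paragraph written out, your proof is complete and independent of the paper's; without it, the central identity is asserted rather than proved.
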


From this result we can extract the scalar product between coherent intertwiners:
\begin{prop}
The scalar product between coherent intertwiner is given by
\be
\la \vec{\jmath},z_{i}||\vec{\jmath},w_{i}\ra = \frac{\prod_{i}(2j_{i})!}{(\sum_{i}j_{i} +1)!}\sum_{a_{ij}\in I_{\vec{\jmath}}}\,\, \prod_{i<j} \frac{(\bar{\Bz}_{ij}\Bw_{ij})^{a_{ij}}
}{a_{ij}!}
\ee
where
$I_{\vec{\jmath}}=\{a_{ij}\in \N| a_{ij}=a_{ji} \,\, \mathrm{and}\,\, \sum_{j:j\neq i}a_{ij} =2j_{i}\}$.
\end{prop}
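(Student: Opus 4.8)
The plan is to deduce the formula directly from the two facts already established: the relation of Proposition~\ref{covint} expressing the summed $\U(N)$ coherent state through coherent intertwiners, and the exponentiated scalar product $(J,z_{i}|J,w_{i})=(z_{i}|w_{i})^{J}$ with $(z_{i}|w_{i})=\f12\tr\,\Bz^{\dagger}\Bw=\sum_{i<j}\bar{\Bz}_{ij}\Bw_{ij}$ (using the antisymmetry of $\Bz,\Bw$). First I would take the Hermitian inner product of the relation of Proposition~\ref{covint} written at the spinors $z_{i}$ against the same relation written at $w_{i}$. Since coherent intertwiners attached to distinct spin assignments $\vec{\jmath}\neq\vec{\jmath}\,'$ lie in the orthogonal subspaces $\cH_{\vec{\jmath}}$ and $\cH_{\vec{\jmath}\,'}$ of $\cH_{N}^{(J)}$, all cross terms vanish and the phases $(-1)^{J}$ cancel, leaving
\be
\frac{1}{(J!)^{2}(J+1)}\,(J,z_{i}|J,w_{i})=\sum_{\sum_{i}j_{i}=J}\frac{1}{\prod_{i}(2j_{i})!}\,\la\vec{\jmath},z_{i}||\vec{\jmath},w_{i}\ra .
\ee
Substituting $(J,z_{i}|J,w_{i})=\left(\sum_{i<j}\bar{\Bz}_{ij}\Bw_{ij}\right)^{J}$ turns this into a single scalar relation, and the remaining task is to disentangle the individual $\vec{\jmath}$ contributions.

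The key device is to introduce auxiliary parameters $s_{i}\in\C$ and rescale each spinor $z_{i}\to s_{i}z_{i}$. Then $\Bz_{ij}\to s_{i}s_{j}\Bz_{ij}$, while the coherent intertwiner scales homogeneously as $\la\vec{\jmath},s_{i}z_{i}||\vec{\jmath},w_{i}\ra=\prod_{i}\bar{s}_{i}^{\,2j_{i}}\,\la\vec{\jmath},z_{i}||\vec{\jmath},w_{i}\ra$, because $|j_{i},s_{i}z_{i}\ra=s_{i}^{2j_{i}}|j_{i},z_{i}\ra$ and the bra is antilinear. The displayed identity then becomes a genuine polynomial identity in the variables $\bar{s}_{i}$:
\be
\frac{1}{(J!)^{2}(J+1)}\left(\sum_{i<j}\bar{s}_{i}\bar{s}_{j}\,\bar{\Bz}_{ij}\Bw_{ij}\right)^{J}=\sum_{\sum_{i}j_{i}=J}\frac{\prod_{i}\bar{s}_{i}^{\,2j_{i}}}{\prod_{i}(2j_{i})!}\,\la\vec{\jmath},z_{i}||\vec{\jmath},w_{i}\ra .
\ee
Matching the coefficient of the monomial $\prod_{i}\bar{s}_{i}^{\,2j_{i}}$ on both sides isolates a single spin assignment $\vec{\jmath}$, which is exactly the separation we need.

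Finally I would expand the left-hand side by the multinomial theorem over the pairs $i<j$, producing $J!\sum_{\sum a_{ij}=J}\prod_{i<j}(\bar{\Bz}_{ij}\Bw_{ij})^{a_{ij}}/a_{ij}!$; the monomial $\prod_{i}\bar{s}_{i}^{\,2j_{i}}$ arises precisely from the symmetric integer arrays with $\sum_{l\neq k}a_{kl}=2j_{k}$, i.e.\ from $a\in I_{\vec{\jmath}}$ (summing this constraint over $k$ also forces $\sum_{i<j}a_{ij}=J$, matching the multinomial degree). Reading off the coefficients on both sides and using $J!(J+1)=(J+1)!$ yields the stated formula. The main obstacle here is conceptual rather than computational: one must justify that distinct spin assignments give orthogonal coherent intertwiners, so that the squared Proposition~\ref{covint} relation splits cleanly, and that the $s_{i}$-rescaling legitimately separates the $\vec{\jmath}$-sectors; once these are in place the rest is routine multinomial bookkeeping. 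As an independent check, the same result follows from the direct group integral $\la\vec{\jmath},z_{i}||\vec{\jmath},w_{i}\ra=\int_{\SU(2)}\rd g\,\prod_{i}\la z_{i}|g|w_{i}\ra^{2j_{i}}$ via the generating-function identity $\int_{\SU(2)}\rd g\,e^{\tr(gM)}=\sum_{n}(\det M)^{n}/\big(n!(n+1)!\big)$ with $M=\sum_{i}t_{i}|w_{i}\ra\la z_{i}|$ and $\det M=\sum_{i<j}t_{i}t_{j}\bar{\Bz}_{ij}\Bw_{ij}$ by Cauchy--Binet, where establishing the $\SU(2)$ Bessel integral would be the principal technical step.
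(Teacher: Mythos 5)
Your proof is correct and is essentially the paper's own argument made explicit: the paper likewise derives the formula by combining Proposition~\ref{covint} with the known product $(J,z_{i}|J,w_{i})=\bigl(\sum_{i<j}\bar{\Bz}_{ij}\Bw_{ij}\bigr)^{J}$, then separates the individual spin sectors by noting that each $\la \vec{\jmath},z_{i}||\vec{\jmath},w_{i}\ra$ and each candidate term is homogeneous of degree $2j_{i}$ in $\bar{z}_{i}$ and $w_{i}$ — which is exactly what your auxiliary rescaling $z_{i}\to s_{i}z_{i}$ implements — before matching multinomial coefficients. The spin-sector orthogonality and the bookkeeping you spell out are implicit in the paper's compressed proof (and your closing $\SU(2)$-integral check mirrors the paper's ``alternative derivation'' computed in its appendix), so the two arguments coincide.
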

\begin{proof}
The proof of the last proposition is straightforward. First one notice that the RHS is  homogeneous of degree $2j_{i}$ in $\bar{z_{i}}$ and $w_{i}$.
Then, if one sum the RHS divided by $\prod_{i}(2j_{i})!$ over all spins such that $\sum{j_{i}}=J$
one recover $(\sum_{i<j} \bar{\Bz}_{ij}\Bw_{ij})^{J}/J!(J+1)!$ which is  equal to ${}_{c}(J,z_{i}|J,w_{i})_{c}.$  according to proposition \ref{covint}.
\medskip

 We now go back to the proposition \ref{covint}.
To prove this relation, we shall compute the scalar product between the $\U(N)$ coherent states $|J,w_{i}\ra$ and the superposition of coherent intertwiners $|J,z_{i})_c$, for arbitrary spinors $w_i,z_i$.
First, using the fact that the operators $F_\Bw$ is invariant under $\SU(2)$, we can get rid of the gauge-averaging over $\SU(2)$ entering the definition of the coherent intertwiners $||\vec{\jmath},z_{i}\ra$. Then explicitly summing over the $j_i$ labels, we obtain the compact expression:
\be
\la J,w_{i}|J,z_{i})_c
\,=\,
\f1{J!(2J)!\sqrt{J+1}}\,( 0 | F_\Bw^J(K_z^\dag)^{2J}| 0),
\qquad\textrm{with}\quad
K_z^\dag=\sum_k^N \zo_k a_k^\dag+\zi_k b_k^\dag.
\ee
We compute the commutators:
\bea
[F_\Bw,K_z^\dag]&=&
\sum_{i,j} \overline{\Bw}_{ji}\,(\zi_j a_i-\zo_j b_i)
\,\equiv\, L \nn\\
{[}F_\Bw,L{]}&=& 0,\qquad
[L,K_z^\dag]=-\tr\, \Bw^\dag\Bz.
\eea
This allows us to calculate the commutator:
\bea
F_\Bw (K_z^\dag)^{2J}
&=&
(K_z^\dag)^{2J}F_\Bw +\sum_{n=0}^{2J-1} (K_z^\dag)^{2J-1-n}L(K_z^\dag)^n\nn\\
&=&
(K_z^\dag)^{2J}F_\Bw -(K_z^\dag)^{2J-2}\tr\,\Bw^\dag\Bz\,\sum_n^{2J-1} n
=
(K_z^\dag)^{2J}F_\Bw -J(2J-1)(K_z^\dag)^{2J-2}\tr\,\Bw^\dag\Bz\,.\nn
\eea
Iterating this relation, we finally compute:
\be
F_\Bw^J(K_z^\dag)^{2J}| 0\ra=\,
(2J)!\,\left(-\f12\tr\,\Bw^\dag\Bz\right)^J\,|0).
\ee
This shows that the equality between scalar products:
\be
( J,w_{i}|J,z_{i})_c
\,=\,
\f1{J!\sqrt{J+1}}\,\left(-\f12\tr\,\Bw\Bz^\dag\right)^J
\,=\,
\f{(-1)^J}{J!\sqrt{J+1}}\,( J,w_{i}|J,z_{i}).
\ee
Since the coherent states $|J,w_{i}\ra$ form a (over-)complete basis of states on the Hilbert space of intertwiners~\footnotemark, we can conclude to the equality of the states $|J,z_{i})_c$ and $|J,z_{i})$ (up to the explicit numerical factor appearing in the previous equation).

\footnotetext{
Explicitly, we can use the formula for the resolution of the identity in term of the $\U(N)$ coherent states:
$$
|J,z_{i})_c
\,=\,
D_{N,J} \int
\frac{|J, w_{i})( J,w_{i}|J,z_{i})_c}{ (w_{i}|w_{i})^{J}}\,\,  \Omega_{w_{i}}
\,=\,
\f{D_{N,J}}{J!\sqrt{J+1}} \int
\frac{|J, w_{i})( J,w_{i}|J,z_{i})}{(w_{i}|w_{i})^{J}}\,\,  \Omega_{w_{i}}
\,=\,
\f1{J!\sqrt{J+1}}\,|J,z_{i}).
$$
}

\end{proof}

It is interesting to give an alternative derivation of the scalar product between two $\U(N)$ coherent states by computing directly the scalar product between coherent intertwiners.
This derivation can be done in two steps: first one notice that
\bea
 {}_c\la J, z_{i}|J,w_i \ra_c &=& \int_{\SU(2)} \rd g
\sum_{\sum{j_{i}}=J}  \frac{\prod_{i}\la z_{i}|g| w_{i}\ra ^{2j_{i}}    }{{(2j_{i})!\cdots (2j_{N})!}}
 = \frac{1}{(2J)!}   \int_{\SU(2)}\rd g\,   \left(\tr  (X g)  \right)^{2J},
 \eea
where $X= \sum_{i}| w_{i}\ra \la z_{i}|$ is the same $2\times 2$ matrix introduced earlier.
Then one shows that
\be
 \frac{1}{(2J)!} \int_{\SU(2)}   \left(\tr  (X g)  \right)^{2J}  \rd g
 = \frac{\left(\mathrm{det}(X) \right)^{J}}{J! (J+1)!}.
 \ee
The explicit computation is done in appendix.
This confirms the equality between scalar products:
 \be
 {}_c\la J, z_{i}|J,w_{i} \ra_c =\frac{(J, z_{i}|J,w_{i} )}{ J! (J+1)!}.
 \ee

Let us add that the integrals over $\SU(2)$ computed above can be summarized in a single integral:
\bea
\int dg\, e^{i\f\lambda 2 \tr\,Xg}
&=&
\sum_n\f{(i\lambda)^n}{2^n n!}\int dg\,(\tr\,Xg)^n
\,=\,
\sum_J \f{(-1)^J}{J!\,(J+1)!} \left(\frac{\lambda^2\det X}4\right)^{J} \nn\\
&=&
\f{2J_1(\lambda\sqrt{\det X})}{\lambda\sqrt{\det X}}\,,
\eea
where $J_1$ is the Bessel function (of the first kind). We have used the obvious fact that $\int dg\,(\tr\,Xg)^n=0$ when $n$ is odd. When $X=i\vec{x}\cdot\vec{\sigma}$ is an element of the $\su(2)$-algebra depending on the vector $\vec{x}\in\R^3$, we recover the standard formula for the fuzzy $\delta_\star$-function for the $\star$-product dual to the convolution product over $\SU(2)$ \cite{star}~:
\be
\int dg\, e^{i\f\lambda 2 \tr\,Xg}
\,=\,
\int dg\, e^{-\f\lambda 2 \tr\,(\vec{x}\cdot\vec{\sigma})g}
\,=\,
\f{2\,J_1(\lambda|x|)}{\lambda |x|}
\,\equiv\,\delta_\star(\vec{x}).
\ee

%

%
%
%

\section{Conclusion}

We have pushed further the investigation of the $\U(N)$ structure of the intertwiner space initiated in \cite{OH,UN}.
Considering the space of all $N$-valent intertwiners, we decompose it separating the intertwiners with different total area $J\equiv \sum_i j_i$~:
$$
\cH_N\,\equiv\,\bigoplus_{\{j_i\}}\cH_{j_1,..,j_N}\,=\, \bigoplus_{J\in\N}\cH_N^{(J)}.
$$
In the previous work \cite{UN}, we have shown that each (sub)space $\cH_N^{(J)}$ at fixed total area carries an irreducible representation of $\U(N)$. The $\u(N)$ generators $E_{ij}$ are quadratic operators in the harmonic oscillators of the Schwinger representation of the $\su(2)$-algebra. In the present work, we established the full space $\cH_N$ as a Fock space by introducing annihilation and creation operators $F_{ij},F^\dag_{ij}$, which allow transitions between intertwiners with different total areas.

We further used these creation operators to define $\U(N)$ coherent states $|J,z_i\ra\propto \,F^\dag_\Bz\,|0)$ labeled by the total area $J$ and a set of $N$ spinors $z_i$ (up to multiplication by a global arbitrary complex number). These states turn out to have very interesting properties:
\begin{itemize}
\item They transform simply under $\U(N)$-transformations:
$u\,|J,z_i)\,=\,|J,(u\,z)_i)$.

\item They get simply rescaled under global $\GL(2,\C)$ transformation acting on all the spinors:
$$
|J,\lambda\,z_i) \,=\, (\det \lambda)^J\,|J,z_i).
$$
In particular, they are invariant under global $\SL(2,\C)$ transformations (among which $\SU(2)$ rotations) since they leave the matrix $\Bz$ invariant. Moreover, one can always map an arbitrary set of spinors $z_i$ onto a new set of spinors satisfying the closure constraints and with the same Pl\"u cker coordinates $\Bz_{ij}$.

\item They are coherent states {\it \`a la} Perelomov  and are obtained by the action of $\U(N)$ on highest weight states. These highest weight vectors correspond to bivalent intertwiners such as the state defined by $z_1=(z^0,z^1)$, $z_2=\varsigma z_1=(-\bar{z}^1,\bar{z}^0)$ and $z_k=0$ for ${k\ge 3}$.

\item For large areas $J$, they are semi-classical states peaked around the expectation values for the $\u(N)$ generators: $\la E_{ij}\ra= 2J\,\la z_i|z_j\ra/\sum_k \la z_k|z_k\ra$.

\item The scalar product between two coherent states is easily computed:
$$
(J,z_{i}|J,w_{i}) = \mathrm{det}\left(\sum_{i}|z_{i}\ra\la w_{i}|\right)^{J}
\,=\,
\left(\f12\tr \,\Bz\Bw^{\dagger}\right)^J.
$$

\item They are related to the coherent (and holomorphic) intertwiners already studied in \cite{coh1,coh2,holomorph}. Writing $||\vec{j},z_i\ra$ for the usual group-averaged tensor product of $\SU(2)$ coherent states defining the coherent  intertwiners, we have:
    $$
    |J,z_i)\propto \sum_{\sum j_i=J}
    \frac{(-1)^J}{\sqrt{(2j_{i})!\cdots (2j_{N})!}} \,||\vec{\jmath},z_{i}\ra.
    $$

\item The coherent states $|J,z_i\ra$ at fixed $J$ provide an over-complete basis on the space $\cH_N^{(J)}$. This gives a new decomposition of the identity on that space $\id_N^{(J)} = \int [d\mu(z_i)]\,|J, z_{i}) (J,z_{i}|$ where $[d\mu(z_i)]$ is a $\U(N)$-invariant measure (on $\C\mathbb{P}_{2N-1}$).

\end{itemize}

We believe that this $\U(N)$ framework opens the door to many applications in loop quantum gravity and spinfoam models. First, having semi-classical states coherent under the action $\U(N)$ should allow to understand better the geometric interpretation of these $\U(N)$ transformations in the large $N$-regime. A path would be how a classical 2-metric on the boundary dual to the intertwiner can be written in term of spinors $z$ on each point of that surface and study how the $\U(N=\infty)$ would act on such classical metric. This should allow to prove (or disprove) the relation between this $\U(N)$ action and area-preserving diffeomorphisms.

Second, the gauge-invariant operators $E$ and $F$ allow to separate the total area variable from deformations of the intertwiner at fixed area. This should be relevant to further understand the deformations (and thus ``diffeomorphisms") of spin network states in loop quantum gravity. To this purpose, we can use the operators $E$ and $F$ to write an algebra of gauge-invariant operators acting on spin network states (and not only on single intertwiners) \cite{soon}. Then we hope that this will allow to write spin network coherent states, most likely using the twisted geometry framework introduced by one of the authors \cite{twisted}. One issue to solve will be how to glue consistently the newly defined $\U(N)$ coherent states into spin network states.

Third, we can apply our $\U(N)$ coherent states to the interpretation and construction of spinfoam models. Indeed, the coherent intertwiners are a crucial ingredient of the standard spinfoam models \cite{EPRL,FK,simone}. Due to the explicit link between these coherent intertwiners and our new $\U(N)$ coherent states, we think that the $\U(N)$-framework could have a two-fold use. On the one hand, we can insert our new decomposition of the identity between quantized 4-simplices, understand how the (simplicty) constraints are expressed in term of $\u(N)$ generators and compute the spinfoam transition amplitudes in term of these new semi-classical states. On the other hand, we hope that using explicitly $\U(N)$ coherent states in the definition of spinfoam models will help to understand the symmetry of the spinfoam amplitudes and more particularly clarify the issue of discrete diffeomorphisms in the spinfoam framework.

\section{appendix}
Here we give the computation of \be
\frac{1}{(2J)!}   \int_{\SU(2)}\rd g\,   \left(\tr  (X g)  \right)^{2J}
\ee
We choose a specific parametrization of $\SU(2)$ group elements,
 \be
g(\alpha)=\left(\begin{array}{cc} \alpha_0 & -\bar{\alpha}_1 \\
                           \alpha_1 & \bar{\alpha}_0  \end{array}\right) \quad \mathrm{with} \quad \alpha_{0} = \sqrt{r} e^{i\phi}, \quad \alpha_{1}= \sqrt{(1-r)} e^{i\psi}.
\ee
The corresponding Haar measure is $ \rd g(\alpha) = \rd r \rd \phi \rd \psi /(2\pi)^{2}$ and
 the integral reads
 \bea\nonumber
& &  \,\,\,\frac{1}{(2J)!} \int_{\SU(2)}   \left(\tr  (X g)  \right)^{2J}  \rd g
=  \frac{1}{(2J)!} \int_{\SU(2)}   \left(\alpha_{0} X_{00} + \bar{\alpha}_{0} X_{11} + \alpha_{1} X_{01} - \bar{\alpha}_{1} X_{10}  \right)^{2J}  \rd g(\alpha) \\
&&= \sum_{a+b+c+d= 2J}  \frac{X_{00}^{a}}{a!} \frac{X_{11}^{b}}{b!} \frac{X_{01}^{c}}{c!}\frac{(-X_{10})^{d}}{d!}
 \int_{\SU(2)}   \alpha_{0}^{a} \bar{\alpha}_{0}^{b} \alpha_{1}^{c} \bar{\alpha}_{1}^{d}  \,\, \rd g(\alpha) \nonumber \\
&&= \sum_{a+b= J}  \frac{(X_{00} X_{11})^{a}}{(a!)^{2}}  \frac{(-X_{01} X_{10})^{b}}{(b!)^{2}}
\int_{0}^{1} r^{a} (1-r)^{b}
=  \sum_{a+b= J}  \frac{(X_{00} X_{11})^{a}}{(a!)^{2}}  \frac{(-X_{01} X_{10})^{b}}{(b!)^{2}} \frac{a! b!} { (a+b+1)!}
\nonumber \\
&&=\frac{\left( X_{00} X_{11}- X_{01} X_{10}\right)^{J}}{J!(J+1)!} \nonumber
= \frac{\left(\mathrm{det}(X) \right)^{J}}{J! (J+1)!}.
 \eea



\end{document}